%% file: 12OCT_arXivSubmission_Cleaned.tex
\documentclass[conference,letterpaper]{IEEEtran}
\usepackage[top=0.73in,bottom=1.05in,left=0.665in,right=0.64in]{geometry}
\IEEEoverridecommandlockouts

\usepackage{enumerate}
\usepackage[utf8]{inputenc} 

\usepackage[T1]{fontenc}
\usepackage{hyperref} 
\usepackage{url}
\usepackage{ifthen}
\usepackage{tcolorbox}
\usepackage[noadjust]{cite}
\usepackage[cmex10]{amsmath} 
\usepackage{amssymb,amsfonts,amscd,amsbsy}
\usepackage{cleveref}
\usepackage{graphicx}
\usepackage{stmaryrd} 
\usepackage[mathscr]{eucal}
\usepackage{mathtools}
\usepackage{bbm}
\usepackage{xcolor}
\usepackage{xfrac}
\usepackage{algorithm}
\usepackage{algpseudocode}
\usepackage{enumerate}
\usepackage{enumitem}
\usepackage{multirow}

\usepackage{booktabs}
\usepackage{tikz} 
\usepackage{pgfplots} 
\pgfplotsset{compat=newest}
\usetikzlibrary{patterns}
\usepackage{bm}

\newtheorem{theorem}{Theorem}
\numberwithin{theorem}{section}
\newtheorem{corollary}{Corollary}
\numberwithin{corollary}{section}
\newtheorem{lemma}{Lemma}
\numberwithin{lemma}{section}
\newtheorem{definition}{Definition}
\numberwithin{definition}{section}

\newtheorem{example}{Example}
\newtheorem{remark}{Remark}
\newtheorem{proposition}{Proposition}

\usepackage{etoolbox}

\makeatletter
\renewcommand*{\thesection}{\arabic{section}}
\renewcommand*{\thesubsection}{\thesection.\arabic{subsection}}
\renewcommand*{\p@subsection}{}
\renewcommand*{\thesubsubsection}{\thesubsection.\arabic{subsubsection}}
\renewcommand*{\p@subsubsection}{}
\makeatother

\usepackage{titlesec}
\titleformat{\subsection}[runin]
  {\normalfont\normalsize\bfseries}{\thesubsection.}{3pt}{}[.]
\titleformat{\subsubsection}[runin]
  {\normalfont\normalsize\itshape}{\thesubsubsection.}{3pt}{}[\mbox{~}]

\input{macros.tex}

\DeclareMathOperator{\LCM}{\mathsf{LCM}}
\newcommand{\Pn}{\mathcal{P}_n}
\newcommand{\Dn}{\Delta_n}

\newcommand{\Ha}{H_\alpha}
\newcommand{\da}{d_\alpha}
\newcommand{\un}{\mathsf{u_n}}
\newcommand{\eone}{\mathsf{e_1}}
\newcommand{\TV}{\mathrm{TV}}

\newcommand{\KS}{\mathrm{KS}}
\newcommand{\dB}{d_B}

\newcommand{\Unif}{\mathrm{Unif}}
\newcommand{\KL}[2]{D\!\left(#1\,\middle\|\,#2\right)}
\newcommand{\sort}{\mathsf{sort}(}
\newcommand{\kpq}{k_{p,q}}
\newcommand{\Ch}{\mathsf{Ch}}

\title{On Measures of Dissimilarity ~\\ on the Majorization Simplex}
\author{Tarun Hariharan, Prasad Krishnan, Abhishek Deshpande 

\thanks{Tarun, Prasad and Abhishek are at the International Institute of Information Technology, Hyderabad, 500032, India (email: $\{$tarun.hariharan@research., prasad.krishnan@, abhishek.deshpande@$\}$iiit.ac.in). 
}
}

\begin{document}

\maketitle

\begin{abstract}
The majorization simplex $\Pn$ is the set of probability vectors of length $n$ whose entries are arranged in non-increasing order. Under the majorization order, $\Pn$ is a lattice, and it is the natural space for comparing distributions when the labels of the outcomes are not relevant. In this work, we study measures of dissimilarity (and, similarity) on $\Pn$. Our first set of results concerns the R\'enyi-entropy based distance $d_\alpha(p,q)=H_\alpha(p)+H_\alpha(q)-2H_\alpha(p\vee q)$ proposed by Yadav and Shkel (2026), which generalizes a metric of Cicalese, Gargano and Vaccaro (2013). We show that $d_\alpha$ is a metric on $\Pn$ for $\alpha\in[1,\infty)$, while for $\alpha\in[0,1)\cup\{\infty\}$, $\da$ is not a metric in general. Our main result in this part identifies the necessary and sufficient conditions for equality in the triangle inequality of $d_\alpha$ when $\alpha>1$, for which we identify the equality conditions for the supermodularity of the R\'enyi entropy $H_\alpha$. Our second set of results concerns the total variation distance and the Hellinger affinity (also called the Bhattacharyya coefficient) on $\Pn$. We obtain a number of bounds on these measures, along with some of their equality conditions. The resulting Hellinger affinity based upper bound on the total variation distance can be smaller than the bounds obtained from Pinsker's and the Bretagnolle--Huber inequalities.
\end{abstract}




\section{Introduction}
\label{sec:intro}

Comparing probability distributions is a foundational task in information theory, statistical mechanics, and data science. Traditionally, this is achieved using $f$-divergences, such as the Kullback--Leibler (KL) divergence \cite{CoverThomas2006,Pinsker1964}. However, not all $f$-divergences are metrics or pseudo-metrics, which makes such divergences unsuitable for formal distance geometry. Furthermore, classical divergences operate on the labeled probability simplex $\Delta_n$, which inherently assumes that the specific identity or label of each underlying state is fixed and meaningful \cite{Leinster21}.

In a wide array of theoretical and applied disciplines, however, the specific labels of outcomes are irrelevant. In financial economics, metrics on sorted distributions are used to quantify wealth inequality and anonymity profiles \cite{The67, CGV13}. In theoretical ecology, they are used for measuring structural biodiversity, allowing researchers to compare the rank-abundance profiles of coral reefs \cite{Wong2018CoralDiversity}, sponge microbiomes \cite{MoitinhoSilva2017}, and neotropical butterfly populations \cite{DeVriesWalla2001} independent of species identity \cite{Jost2006, LC12, Leinster21}. Furthermore, in quantum information theory, comparing sorted eigenspectra is fundamental to characterizing entanglement transformations, pure-state resource theories, and majorization flows \cite{HD19, Nielsen1999, Vidal1999, Desideetal2024}. 

By sorting probabilities in descending order, we abstract away the labels and isolate the distribution's intrinsic concentration or evenness \cite{MR23, GR24}. The space of such sorted distributions is equipped with a natural partial order known as majorization ($\preceq$), which mathematically formalizes the concept of one distribution being ``more concentrated'' than another \cite{MOA2011}. Notably, the space of sorted distributions under majorization forms a complete lattice, termed the \emph{majorization lattice} \cite{Bapat91}, where every pair of distributions possesses a uniquely defined supremum (the join $\vee$) and infimum (the meet, denoted by $\wedge$). Metrics and pseudo-metrics on this lattice are crucial for bounding the physical cost of transforming one quantum resource into another \cite{HD19}, or robustly measuring the distance between data profiles without label-matching overhead \cite{CGV13}. 

The study of metrics and pseudometrics on the majorization lattice have been progressing in the recent two decades. Cicalese and Vaccaro originally demonstrated that the Shannon entropy exhibits supermodularity on the majorization lattice \cite{CicaleseVaccaro2002}, a property subsequently exploited by Cicalese, Gargano, and Vaccaro (CGV) to define a specific structural metric on this space \cite{CGV13}. Recently, Yadav and Shkel proposed a generalization to the broader R\'enyi entropy family \cite{YS26}. In \cite{YS26}, the regimes of $\alpha$ for which the R\'enyi entropy $H_\alpha$ is subadditive and supermodular on the majorization lattice were further characterized, and similar analysis was extended to Tsallis entropy. In \cite{SJDC26}, the authors present some majorization `precursor' relations, based on which supermodularity and subadditivity relationships of a large class of entropies including Shannon, Tsallis, and R\'enyi entropies are argued. Large sub-families of the Sharma--Mittal family of entropies also satisfy subadditivity and supermodularity on the majorization lattice \cite{BV26}. Majorization theory has also been used in the theory of minimum-entropy couplings \cite{CGV_min-entropy-couplingTIT2019,SY_2023_ISIT_InfoSpectrumConverse_Coupling, compton_2022ISIT_greedyminentrcoupl} and functional representations of random variables \cite{YadShkel_2025ISIT_MinRenyiFunctionalRepres, serhat-shkel-2026ISIT-perfectfunctrepresentations, LiCheuk_2025_ISIT_DiscreteLayeredEntropy}.  
 Despite these advances, the complete collection of properties of R\'enyi metrics on the majorization lattice has not been sufficiently characterized. Further, the behaviour of classical distances (like total variation \cite{BH1979}) when confined to the majorization simplex is also not fully understood. 
 
 In this work, we expand the study of (dis)similarity measures on the majorization lattice, addressing these two gaps. We present two sets of results. Our first collection of results is regarding the metric $\da$ introduced in \cite{YS26}, which is defined as $d_\alpha(p,q) = H_\alpha(p) + H_\alpha(q) - 2H_\alpha(p\vee q)$, where $p,q$ are distributions in the $n$-ary majorization simplex $\Pn$. Our second collection of results concerns the classical (dis)similarity measures, in particular the Hellinger affinity (also known as the Bhattacharyya coefficient) and the total variation distance, leading to new bounds on these quantities on the simplex, which are complementary to existing bounds. 

Specifically, our primary contributions are organized as follows:
\begin{itemize}
    \item \textit{Characterization of the R\'enyi-Lattice Metric:} After a brief review of relevant conceptual foundations in Section \ref{sec:setup}, in Section \ref{sec:defndalpha_structural_properties} we give a proof that the R\'enyi lattice distance $d_\alpha(p,q)$ is a valid metric on the majorization simplex $\Pn$ for $\alpha \in [1, \infty)$, decaying into a pseudo-metric at the boundary limits. We also present some structural properties of $d_\alpha$, including bounds on the same and their equality conditions. Importantly, we present the necessary and sufficient conditions on the distributions in $\Pn$, required for equality in the triangle inequality of $d_\alpha$ and supermodularity of $H_\alpha$ ($\alpha > 1$). The identification of these precise conditions is novel to this present work, to the best of our knowledge. 
    \item \textit{Bounds on TV distance and Hellinger affinity on $\Pn$:} In Section \ref{sec:tv-kl}, we obtain new bounds on the total variation distance and Hellinger affinity on the majorization simplex, by carefully utilizing the ordering of the distributions involved. These new relationships complement existing bounds (including those that use the Kullback-Leibler divergence), and for some conditions are improvements on the same. 

\end{itemize}

\emph{Notation:} Throughout, $n\geq 2$ is fixed and all logarithms are natural, so all entropies are measured in nats, and we write $[n]:=\{1,\ldots,n\}$. For $a$ real, we use $a^+:=\max\{a,0\}$.   The notation $\Dn$ denotes the labeled probability simplex, $\Pn$ denotes the majorization simplex (see Definition \ref{def:labelledandsortedsimplices}), $\preceq$ denotes the majorization order on $\Pn$, $\wedge$ and $\vee$ denote the lattice meet and join, $\un=(1/n,\ldots,1/n)$  denotes the uniform distribution, and we denote $\eone=(1,0,\ldots,0)\in \Pn$. For $p\in\Pn$ and $A\subseteq[n]$, we write $p(A):=\sum_{i\in A}p_i$, and $\supp(p)$ for the support of $p$. For a pair $p,q\in\Pn$ we set $\kpq:=\max\{|\supp(p)|,|\supp(q)|\}\leq n$, and $\Unif(k)\in\Pn$ denotes the vector whose first $k$ entries equal $1/k$. 

\section{Preliminaries: Majorization and R\'enyi Entropy}
\label{sec:setup}

In this section, we establish the notation for probability simplices, recall the concept of majorization, and review the relevant properties of R\'enyi entropy on the majorization simplex. 

\begin{definition}[{Labeled and sorted simplices}]
\label{def:labelledandsortedsimplices}
The \emph{labeled simplex} is $\Dn=\{p=(p_1,\ldots,p_n):p_i\geq 0,\sum_{i=1}^n p_i=1\}$. The symmetric group $\mathsf{S}_n$ acts on $\Dn$ by permuting coordinates. The \emph{majorization simplex} \cite{CGV13, YS26}, denoted $\Pn$, is the set of \emph{sorted} probability vectors:
\begin{align}
\label{eqn:pnsimplex}
  \Pn = \{(p_1,\hdots,p_n)\in\Delta_n:\ p_1\geq\cdots\geq p_n\}.
\end{align}

Every $\mathsf{S}_n$-orbit in $\Dn$ possesses exactly one sorted representative, establishing a natural bijection between $\Pn$ and the quotient space $\Dn/\mathsf{S}_n$.
\end{definition}
 
For $p\in \Pn$, let $S_p(k) = \sum_{i=1}^k p_i,\forall k\in\{0,\dots,n\}$ (where we define $S_p(0)=0$ by convention), denote the discrete cumulative prefix sums. Note that $S_p(k), \forall k\in[n]$ uniquely characterizes $p$. We now recall a natural partial order that can be imposed on $\Pn$ using this quantity \cite{HLP1934, MOA2011}.
\begin{definition}[{Majorization \cite[Chapter 1, Sec. A]{MOA2011}}]
\label{def:majorization}
For $p,q\in\Pn$, $p$ is said to be \emph{majorized by} $q$, denoted $p\preceq q$, if their prefix sums satisfy 
\[
  S_p(k) \leq S_q(k), \qquad \forall k\in [n].
\] 
\end{definition}

Majorization extends to vectors of any length. For $u\in\mathbb R^N$, let $\sort u)$ denote its non-increasing rearrangement, and for $u\in\mathbb R^a$, $v\in\mathbb R^b$ let $\mathsf{sort}(u\,\|\,v)\in\mathbb R^{a+b}$ denote the non-increasing rearrangement of the concatenation of $u$ and $v$. For $a,b\in\mathbb R^N$ we write $a\preceq b$ if $\sum_i a_i=\sum_i b_i$ and $\sum_{i\le k}\sort a)_i\le\sum_{i\le k}\sort b)_i$ for all $k\in[N]$ \cite[Chapter 1, Sec. A]{MOA2011}. For vectors in $\Pn$, which are already sorted and have sum $1$, this is exactly Definition~\ref{def:majorization}. We use the same symbol $\preceq$ in both settings. 

It is also true \cite{Bapat91} that the partially ordered set $(\Pn,\preceq)$ is a \textit{complete lattice}, meaning every subset has a supremum and an infimum. For any two $p,q\in \Pn$, the \textit{meet} $p\wedge q$ and \textit{join} $p\vee q$ are uniquely characterized via their prefix sums:
\begin{align*}
  S_{p\wedge q}(k) &= \min\{S_p(k),S_q(k)\}\\
  S_{p\vee q} &= \LCM\bigl(\max\{S_p, S_q\}\bigr),
\end{align*}
where $\LCM$ denotes the \textit{least concave majorant}\footnote{The least concave majorant of $M$ on $\{0,\ldots,n\}$ is the pointwise smallest $J\ge M$ whose increments $J(k)-J(k-1)$ are non-increasing. It satisfies $J(0)=M(0)$ and $J(n)=M(n)$. An explicit iterative procedure to compute $p\vee q$, which is characterized by $S_{p\vee q}(k)$, is given in \cite[Sec.~III]{CicaleseVaccaro2002}.} over the discrete domain $\{0,\ldots,n\}$. 

It is easy to see that the minimum element of this lattice is the uniform distribution $\mathsf{u_n}=(1/n,\ldots,1/n)$, and the maximum is $\eone=(1,0,\ldots,0)$, the point mass on the first coordinate. 

For $\alpha \in (0,1) \cup (1,\infty)$ and $p\in\Pn$, the R\'enyi entropy is defined as
\[
  H_\alpha(p) = \frac{1}{1-\alpha}\log\Bigl(\sum_{i=1}^n p_i^\alpha\Bigr),
\]
extended by continuity to $H_1(p)=-\sum_{i=1}^n p_i\log p_i$, $H_0(p)=\log|\supp(p)|$, and $H_\infty(p)=-\log p_1$. \
\begin{definition}[Schur-Concavity]
\label{def:schur}
A function $f:\Pn\to\mathbb R$ is \emph{Schur-concave} if $p\preceq q\implies f(p)\geq f(q)$, and \emph{strictly} Schur-concave if the inequality is strict whenever $p\neq q$. 
\end{definition}

For every $\alpha\in[0,\infty]$, $H_\alpha$ is Schur-concave \cite[Chapter 3, Sec. A]{MOA2011}. For any finite $\alpha\in(0,\infty)$, $H_\alpha$ is strictly Schur-concave \cite{YS26}. Strict Schur-concavity fails at the boundary cases $\alpha\in\{0,\infty\}$.

Some results in the subsequent sections rely on the supermodularity of $H_\alpha$ over the lattice operations in $\Pn$, given as follows.
\begin{theorem}[{Supermodularity  of $H_\alpha$ in $\Pn$ \cite{CicaleseVaccaro2002, YS26}}]  
\label{thm:supermod}
For every $\alpha\in\{0\}\cup[1,\infty]$, R\'enyi entropy is supermodular:
\[
  H_\alpha(p) + H_\alpha(q) \leq H_\alpha(p\wedge q) + H_\alpha(p\vee q).
\]
\end{theorem}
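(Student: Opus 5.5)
Since the statement covers several regimes of $\alpha$, I will handle them separately; the case $\alpha=1$ is the Cicalese--Vaccaro supermodularity of Shannon entropy, and the same scheme extends to $\alpha>1$.

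\textbf{The case $\alpha=0$.} Here $H_0(p)=\log|\supp(p)|$. The support sizes satisfy $|\supp(p\wedge q)|\ge \kpq$, because $S_{p\wedge q}(k)=\min\{S_p(k),S_q(k)\}<1$ for every $k<\kpq$. On the other side, $|\supp(p\vee q)|\ge\min\{|\supp(p)|,|\supp(q)|\}$. To see this, note that $\max\{S_p,S_q\}$ stays below $1$ up to the smaller support, and its least concave majorant must reach $1$ exactly where the maximum first equals $1$. Indeed, the LCM equals $1$ only from the first index at which $\max\{S_p,S_q\}=1$, since a concave function reaching $1$ earlier would have to dominate a value $<1$ with a flat part, but it stays $\ge$ the max and is $\le 1$. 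Multiplying the two bounds gives $|\supp(p\wedge q)|\,|\supp(p\vee q)|\ge |\supp(p)|\,|\supp(q)|$, and taking logarithms yields the inequality.

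\textbf{Finite $\alpha\ge1$.} For $\alpha>1$ I work with $\Phi(p)=\sum_i p_i^\alpha$, which is Schur-convex, so $H_\alpha=\frac{1}{1-\alpha}\log\Phi$. Write $\Phi_\wedge=\Phi(p\wedge q)$ and $\Phi_\vee=\Phi(p\vee q)$. Supermodularity of $H_\alpha$ is then equivalent to
\[
\Phi_\wedge\,\Phi_\vee\le\Phi(p)\,\Phi(q).
\]
The plan proceeds in three steps.
\begin{enumerate}
\item First, establish the vector majorization $\sort p\|q)\preceq\sort (p\wedge q)\|(p\vee q))$. This holds because the partial sums of the top-$k$ entries of the concatenations are $\max_{j}[S_p(j)+S_q(k-j)]$, and these are dominated using the LCM and min structure; this is the Cicalese--Vaccaro key lemma.
\item Second, apply Karamata's inequality with the convex function $x^\alpha$ to get $\Phi(p)+\Phi(q)\le\Phi_\wedge+\Phi_\vee$.
\item Third, combine this with the ordering $\Phi_\wedge\le\min\{\Phi(p),\Phi(q)\}$, which follows from Schur-convexity and $p\wedge q\preceq p,q$. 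With sums ordered this way and the smaller factor on the left-hand side smaller, the product inequality $xy\le ab$ holds whenever $x\le\min\{a,b\}$ and $x+y\ge a+b$ fails in general. So I instead normalize differently.
\end{enumerate}

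Step three needs repair, because the sum bound points the wrong way for products. I will use log-convexity in the LCM construction instead: I majorize in the pair $(\log\text{-}\Phi)$ by a direct argument on the pool-adjacent-violators steps. Each merging step of \cite{CicaleseVaccaro2002} replaces a block by its average, and I track $\Phi$ along the iterative join procedure.

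The main obstacle is exactly this passage from the additive (Karamata) inequality to the multiplicative one for $\alpha>1$. If the direct route fails, I would cite the Yadav--Shkel argument \cite{YS26}, which is what the theorem credits.

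\textbf{The case $\alpha=\infty$.} Here $H_\infty(p)=-\log p_1$, and $(p\vee q)_1=\max\{p_1,q_1\}$. The meet satisfies $(p\wedge q)_1=\min\{p_1,q_1\}$, since $S_{p\wedge q}(1)=\min\{S_p(1),S_q(1)\}$. Hence the inequality holds with equality.
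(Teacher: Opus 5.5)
\textbf{The gap: your Step 1 majorization is reversed, and the reversed relation is false.} The correct relation is the precursor lemma the paper quotes from \cite{SJDC26} in Appendix~\ref{app:comparability-proof}:
\[
\mathsf{sort}(p\wedge q\,\|\,p\vee q)\preceq\mathsf{sort}(p\,\|\,q).
\]
So the meet/join pair is \emph{less} spread than $(p,q)$, not more.

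A concrete check: take $p=(0.5,0.3,0.2)$ and $q=(0.45,0.45,0.1)$, so $p\wedge q=(0.45,0.35,0.2)$ and $p\vee q=(0.5,0.4,0.1)$.
\begin{itemize}
\item $\mathsf{sort}(p\,\|\,q)=(0.5,0.45,0.45,0.3,0.2,0.1)$ has top-$3$ sum $1.4$.
\item $\mathsf{sort}(p\wedge q\,\|\,p\vee q)=(0.5,0.45,0.4,0.35,0.2,0.1)$ has top-$3$ sum only $1.35$.
\end{itemize}
Hence your Step 1 fails. Your Step 2 fails with it: at $\alpha=2$,
\[
\Phi(p)+\Phi(q)=0.795>0.785=\Phi_\wedge+\Phi_\vee.
\]
There is also a sanity check you could have run. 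With your direction, Karamata at $\alpha=1$ with $\phi(t)=t\log t$ would give $H(p)+H(q)\ge H(p\wedge q)+H(p\vee q)$, which is the opposite of the theorem.

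This reversal is exactly why your Step 3 stalls. The proposed repair (tracking $\Phi$ through pool-adjacent-violators merges, ``log-convexity'') is not an argument. Falling back on \cite{YS26} therefore leaves the main case $\alpha\in[1,\infty)$ unproved.

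\textbf{The fix.} Once the direction is corrected, the ingredients you already listed in Step 3 finish the proof.
\begin{itemize}
\item HLP with the convex map $t\mapsto t^\alpha$ gives $\Phi_\wedge+\Phi_\vee\le\Phi(p)+\Phi(q)$.
\item Schur-convexity with $p\wedge q\preceq p,q$ gives $0\le\Phi_\wedge\le\min\{\Phi(p),\Phi(q)\}$.
\end{itemize}
Together these give
\[
\Phi_\wedge\Phi_\vee\le\Phi_\wedge\bigl(\Phi(p)+\Phi(q)-\Phi_\wedge\bigr)=\Phi(p)\Phi(q)-\bigl(\Phi(p)-\Phi_\wedge\bigr)\bigl(\Phi(q)-\Phi_\wedge\bigr)\le\Phi(p)\Phi(q),
\]
which is supermodularity of $H_\alpha$ for $\alpha>1$. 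For $\alpha=1$, HLP with $t\log t$ gives supermodularity directly, with no multiplicative step.

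The paper does not reprove this theorem; it cites \cite{CicaleseVaccaro2002,YS26}. But this corrected route is exactly what Appendix~\ref{app:comparability-proof} does: precursor plus HLP for the additive inequality, then the Topkis-type passage from sums to products.

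\textbf{Boundary cases.} Your cases $\alpha=0$ and $\alpha=\infty$ are correct; in fact both hold with equality. Two assertions do need a real justification, though. Let $M=\max\{S_p,S_q\}$ and let $J$ be its least concave majorant.
\begin{itemize}
\item $J(k)$ is the maximum over chords $\lambda M(i)+(1-\lambda)M(j)$ with $i\le k\le j$. Since $M\le1$ is nondecreasing, such a chord equals $1$ only if $M(k)=1$. This gives $|\supp(p\vee q)|=\min\{|\supp(p)|,|\supp(q)|\}$.
\item At $k=1$, the chords from $i=0$ give $J(1)=\max_{j\ge1}M(j)/j$. For sorted vectors $S_p(j)/j\le p_1$ and $S_q(j)/j\le q_1$, so $(p\vee q)_1=\max\{p_1,q_1\}$.
\end{itemize}
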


For $\alpha\in(0,1)$, neither supermodularity nor submodularity hold \cite[Theorem 4]{YS26}.

\section{A R\'enyi-entropy based distance measure on $\Pn$ and its structural properties}
\label{sec:defndalpha_structural_properties}

In this section, we recall a R\'enyi entropy based distance metric on $\Pn$ that was introduced by Yadav and Shkel \cite{YS26}, termed $\da$. In \cite{CGV13},  Cicalese, Gargano and Vaccaro established a number of structural facts about the special case of $d_1$  beyond the metric axioms themselves. We obtain some properties of $\da$, for $\alpha\in[1,\infty)$. In particular, we show in Subsection \ref{subsec:conditionsforequalitysupermod} the necessary and sufficient conditions under which the triangle inequality for $\da$ holds with equality. For establishing these, we also obtain the equality conditions for supermodularity of the R\'enyi entropy. We start with the definition of $d_\alpha$. 
\begin{definition}[The R\'enyi metric $d_\alpha$ as proposed by \cite{YS26}]
\label{def:dalpha}
For $\alpha\in[1,\infty)$ and $p,q\in\Pn$,
\[
  d_\alpha(p,q) \triangleq H_\alpha(p) + H_\alpha(q) - 2H_\alpha(p\vee q).
\]
\end{definition}
At $\alpha=1$ this is exactly the metric $d$ of CGV~\cite[Definition~3]{CGV13}. 

 It is observed in \cite{YS26} that $d_\alpha$ is a metric\footnote{More generally, any Schur-concave, supermodular $F:\Pn\to\mathbb R$ induces a pseudo-metric (satisfying all properties of a metric except that $d_F(p,q)=0\implies p=q$) $d_F(p,q):=F(p)+F(q)-2F(p\vee q)$ on the majorization lattice, and $\da:=d_{H_\alpha}$ is the special case studied here.}, but we provide a proof here for the sake of completeness.

\begin{theorem}[$d_\alpha$ is a metric \cite{YS26}]
\label{thm:metric}
For each $\alpha\in[1,\infty)$, $(\Pn,d_\alpha)$ is a metric space. 
\end{theorem}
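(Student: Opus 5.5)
We verify the metric axioms for $d_\alpha(p,q)=H_\alpha(p)+H_\alpha(q)-2H_\alpha(p\vee q)$ in order. Throughout we use two facts: $H_\alpha$ is strictly Schur-concave for finite $\alpha\ge 1$, and it is supermodular for $\alpha\in[1,\infty)$ by Theorem~\ref{thm:supermod}.

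\emph{Symmetry.} This is immediate, since $p\vee q=q\vee p$.

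\emph{Non-negativity and identity of indiscernibles.} Since $p\preceq p\vee q$ and $q\preceq p\vee q$, Schur-concavity gives $H_\alpha(p)\ge H_\alpha(p\vee q)$ and $H_\alpha(q)\ge H_\alpha(p\vee q)$. Summing these yields $d_\alpha(p,q)\ge 0$. For the converse direction of the identity axiom, $d_\alpha(p,p)=0$ because $p\vee p=p$. Now suppose $d_\alpha(p,q)=0$. Then both inequalities above must be equalities, so $H_\alpha(p)=H_\alpha(p\vee q)$. Strict Schur-concavity (valid for finite $\alpha\in[1,\infty)$) then forces $p=p\vee q$, and by the same argument $q=p\vee q$. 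Hence $p=q$. This is exactly where the restriction to finite $\alpha$ matters, since strictness fails at $\alpha=\infty$.

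\emph{Triangle inequality.} This is the main step. We must show $d_\alpha(p,r)\le d_\alpha(p,q)+d_\alpha(q,r)$, which after cancelling $H_\alpha(p)+H_\alpha(r)$ is equivalent to
\[
H_\alpha(p\vee q)+H_\alpha(q\vee r)\le H_\alpha(q)+H_\alpha(p\vee r).
\]
Set $x=p\vee q$ and $y=q\vee r$, and apply supermodularity (Theorem~\ref{thm:supermod}) to this pair:
\[
H_\alpha(x)+H_\alpha(y)\le H_\alpha(x\wedge y)+H_\alpha(x\vee y).
\]
We then bound the two right-hand terms using lattice relations and Schur-concavity.

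First, $q\preceq x$ and $q\preceq y$, so $q\preceq x\wedge y$, and Schur-concavity gives $H_\alpha(x\wedge y)\le H_\alpha(q)$.

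Second, $x\vee y=p\vee q\vee r\succeq p\vee r$, so $H_\alpha(x\vee y)\le H_\alpha(p\vee r)$.

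Combining these two bounds with the supermodularity inequality gives the required inequality.

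The only delicate points are the correct direction of each Schur-concavity step and recalling that the meet is the greatest lower bound, which is what justifies $q\preceq x\wedge y$.
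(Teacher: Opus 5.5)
Your proof is correct and takes essentially the same approach as the paper's: Schur-concavity gives non-negativity, strict Schur-concavity (for finite $\alpha$) gives the identity of indiscernibles, and the triangle inequality comes from supermodularity applied to the two joins through the intermediate point, with the meet bounded by the intermediate point's entropy and the join bounded by $H_\alpha$ of the outer join. The only difference is cosmetic: you label the intermediate point $q$ where the paper uses $r$.
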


\begin{IEEEproof}
See Appendix~\ref{app:metric-proof}.
\end{IEEEproof}

In Appendix \ref{app:da-metric-range}, we discuss the cases of $\alpha\in[0,1)\cup\{\infty\}$ and show why $\da$ is not a metric for these, in the case of general $n$.

\subsection{Structural properties of $d_\alpha$}
The following proposition provides upper and lower bounds on $d_\alpha(p,q)$, and the conditions under which they hold with equality. Further, it also establishes that for any chain of comparable distributions $p,q,r \in \Pn$, $d_\alpha$ follows a certain additive relationship.
\begin{proposition}
\label{prop:lb-ub-chain}
Let $p,q,r\in\Pn$ and $\alpha \in [1, \infty)$. Then,
\begin{enumerate}
    \item \label{itm: lb}
    $d_\alpha(p,q)\geq |H_\alpha(p)-H_\alpha(q)|$ with equality if and only if $p\preceq q$ or $q\preceq p$.
    \item \label{itm: ub}
    $d_\alpha(p,q)\leq H_\alpha(p)+H_\alpha(q)$, with equality if and only if $p\vee q=\eone$, i.e.\ if and only if $p=\eone$ or $q=\eone$. 
    \item \label{itm: chain} 
    If $p\preceq r\preceq q$, then $d_\alpha(p,q)=d_\alpha(p,r)+d_\alpha(r,q)$. 
\end{enumerate}
\end{proposition}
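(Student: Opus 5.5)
All three parts follow from two facts: $H_\alpha$ is strictly Schur-concave for finite $\alpha\in[1,\infty)$, and $p\vee q$ is the least upper bound of $p$ and $q$ in $(\Pn,\preceq)$. Since $p\preceq p\vee q$ and $q\preceq p\vee q$, Schur-concavity gives $H_\alpha(p\vee q)\le \min\{H_\alpha(p),H_\alpha(q)\}$.

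\emph{Part~\ref{itm: lb}.} Assume without loss of generality that $H_\alpha(p)\ge H_\alpha(q)$. Then
\[
d_\alpha(p,q)-\bigl(H_\alpha(p)-H_\alpha(q)\bigr)=2\bigl(H_\alpha(q)-H_\alpha(p\vee q)\bigr)\ge 0 .
\]
Equality holds if and only if $H_\alpha(q)=H_\alpha(p\vee q)$. Because $q\preceq p\vee q$, strict Schur-concavity turns this into $q=p\vee q$, i.e.\ $p\preceq q$.

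For the converse, suppose $p\preceq q$. Then $p\vee q=q$, so $d_\alpha(p,q)=H_\alpha(p)-H_\alpha(q)$. This is nonnegative, so it equals $|H_\alpha(p)-H_\alpha(q)|$. The case $q\preceq p$ is symmetric.

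One point needs care: the WLOG must be stated in terms of entropies and not of the order. If $H_\alpha(p)\ge H_\alpha(q)$ and equality forces $p\preceq q$, that is consistent, since $p\preceq q$ implies $H_\alpha(p)\ge H_\alpha(q)$.

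\emph{Part~\ref{itm: ub}.} By definition, $H_\alpha(p)+H_\alpha(q)-d_\alpha(p,q)=2H_\alpha(p\vee q)\ge 0$. Equality holds if and only if $H_\alpha(p\vee q)=0$. For $\alpha\ge1$ finite, $H_\alpha(r)=0$ exactly when $r$ is a point mass. Indeed, $\sum_i r_i^\alpha\le 1$ with equality only for point masses when $\alpha>1$, and the Shannon case $\alpha=1$ is standard. The only point mass in $\Pn$ is $\eone$, so the condition is $p\vee q=\eone$.

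It remains to show $p\vee q=\eone$ if and only if $p=\eone$ or $q=\eone$. The reverse direction is clear, since $\eone$ is the maximum element. For the forward direction, use the formula $S_{p\vee q}=\LCM(\max\{S_p,S_q\})$. We need $S_{p\vee q}(1)=1$. Set $M=\max\{S_p,S_q\}$ and $m=\max\{p_1,q_1\}$, so $M(1)=m$.

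Suppose $m<1$. Define $J(k)=\min\{1,km\}$ for $k\ge1$ and $J(0)=0$. Then $J$ is concave, $J(n)=1$ because $nm\ge 1$, and $J\ge M$ because $S_p(k)\le k p_1\le km$ (and likewise for $q$). Hence $\LCM(M)\le J$, which gives $S_{p\vee q}(1)\le m<1$. So $p\vee q=\eone$ forces $m=1$, i.e.\ $p_1=1$ or $q_1=1$. This $\LCM$ bound is the only nontrivial step in the proposition, and I expect it to be the main obstacle.

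\emph{Part~\ref{itm: chain}.} If $p\preceq r\preceq q$, then by transitivity $p\preceq q$. Hence all three joins are explicit: $p\vee q=q$, $p\vee r=r$ and $r\vee q=q$. Substituting,
\[
d_\alpha(p,r)+d_\alpha(r,q)=\bigl(H_\alpha(p)-H_\alpha(r)\bigr)+\bigl(H_\alpha(r)-H_\alpha(q)\bigr)=H_\alpha(p)-H_\alpha(q)=d_\alpha(p,q).
\]
The sum telescopes, so no further argument is needed.
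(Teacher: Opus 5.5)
Your proof is correct and follows essentially the same route as the paper: for part 1, Schur-concavity gives $H_\alpha(p\vee q)\le H_\alpha(q)$, and strict Schur-concavity applied to $q\preceq p\vee q$ settles equality; part 2 uses nonnegativity of $H_\alpha(p\vee q)$; and part 3 telescopes using the explicit joins. The one place you go beyond the paper is the concave-majorant bound $S_{p\vee q}(1)\le\max\{p_1,q_1\}$ via $J(k)=\min\{1,km\}$, which correctly justifies the step $p\vee q=\eone\Rightarrow p=\eone$ or $q=\eone$ that the paper simply asserts.
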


\begin{IEEEproof}
Consider $p,q \in \Pn$. Assume, without loss of generality, that $H_\alpha(p)\geq H_\alpha(q)$. Then, we have 
\begin{align}
  d_\alpha(p,q) &= H_\alpha(p)+H_\alpha(q)-2H_\alpha(p\vee q)
  \\
  \label{eqn:dalphalowerbound} 
  &\geq  
  H_\alpha(p)+H_\alpha(q)-2H_\alpha(q) = H_\alpha(p)-H_\alpha(q), 
\end{align}
where $H_\alpha(p\vee q)\leq H_\alpha(q)$ holds by Schur-concavity. Now, if $p,q$ are not comparable to each other (i.e., $p \not\preceq q$ and $q \not\preceq p$), then, $p\vee q$ strictly majorizes $p,q$, so strict Schur-concavity gives $H_\alpha(p\vee q)<\min(H_\alpha(p),H_\alpha(q))$, which gives a strict inequality in \eqref{eqn:dalphalowerbound}. On the other hand, if without loss of generality $p\preceq q$, then $p\vee q=q$. Hence, we have equality in \eqref{eqn:dalphalowerbound}. This concludes the proof of (\ref{itm: lb}). 

As, $H_\alpha(p\vee q)\geq0$, (\ref{itm: ub}) follows immediately. Clearly, equality holds if and only if $H_\alpha(p\vee q)=0$, which holds if and only if $p\vee q=\eone$. This happens if and only if $p=\eone$ or $q=\eone$, thus concluding the proof of (\ref{itm: ub}).

    Since $p,q,r$ are comparable in (\ref{itm: chain}), by applying the $\da(p,q) = \Ha(p)-\Ha(q)$ identity to the pairs $(p,r)$, $(p,q)$ and $(r,q)$ and expanding, one can observe (\ref{itm: chain}).
\end{IEEEproof}

We have the following corollary.
\begin{corollary}
\label{corr:join-path}
    For any $p,q\in\Pn$, we have $d_\alpha(p,q)=d_\alpha(p,p\vee q)+d_\alpha(q,p\vee q)$.
\end{corollary}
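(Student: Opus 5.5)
The plan is to read the corollary off Proposition~\ref{prop:lb-ub-chain}, item~(\ref{itm: lb}). First I will note that $p\preceq p\vee q$ and $q\preceq p\vee q$, since the join is an upper bound of both $p$ and $q$ in the lattice $(\Pn,\preceq)$. Because each pair $(p,p\vee q)$ and $(q,p\vee q)$ is comparable, the equality case of item~(\ref{itm: lb}) applies to it. Schur-concavity orders the two entropies, with $H_\alpha(p)\geq H_\alpha(p\vee q)$, so the absolute value can be removed and
\[
d_\alpha(p,p\vee q)=H_\alpha(p)-H_\alpha(p\vee q),\qquad d_\alpha(q,p\vee q)=H_\alpha(q)-H_\alpha(p\vee q).
\]
One can also see this directly from Definition~\ref{def:dalpha}: since $p\vee(p\vee q)=p\vee q$, we get $d_\alpha(p,p\vee q)=H_\alpha(p)+H_\alpha(p\vee q)-2H_\alpha(p\vee q)$, which gives the same expression.

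Next I will add the two identities. The sum is $H_\alpha(p)+H_\alpha(q)-2H_\alpha(p\vee q)$, which is exactly $d_\alpha(p,q)$ by Definition~\ref{def:dalpha}, and this proves the corollary.

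No step here is a real obstacle. The only point to check is the absorption identity $p\vee(p\vee q)=p\vee q$, which holds in any lattice. Item~(\ref{itm: chain}) of the proposition does not apply, because $p$ and $q$ need not be comparable; the argument therefore uses only the join, not a chain through it.
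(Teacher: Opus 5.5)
Your proof is correct and is essentially the paper's argument: the paper computes $d_\alpha(p,z)+d_\alpha(z,q)=H_\alpha(p)+H_\alpha(q)-2H_\alpha(z)$ for any common upper bound $z\succeq p,q$, using exactly your absorption step $p\vee z=z$, and then takes $z=p\vee q$. The paper's version additionally notes that, by Schur-concavity, this sum is at least $d_\alpha(p,q)$ for every such $z$, so the join minimizes it. Your remark that the identity itself needs only lattice absorption, and no Schur-concavity, is also correct.
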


\begin{IEEEproof}
    For $z\succeq p,q$ we have $z\succeq p\vee q$, so $d_\alpha(p,z)+d_\alpha(z,q)=H_\alpha(p)+H_\alpha(q)-2H_\alpha(z)\ge d_\alpha(p,q)$, with equality at $z=p\vee q$.
\end{IEEEproof}

Corollary \ref{corr:join-path} is the R\'enyi analogue of \cite[End of Section 3]{CGV13}  variational characterization $d(p,q)=\min_{z\succeq p,q}[d(p,z)+d(z,q)]$, with the minimum attained at $z=p\vee q$.

While $H_\alpha(p)$ is non-increasing in $\alpha$, the following example shows that $\alpha\mapsto d_\alpha(p,q)$ is generally not monotonic on $[1,\infty)$. 
\begin{example}
\label{ex:nonmonotone}
Let $p=(2/3,1/3)$ and $q=(5/6,1/6)$ in $\mathcal{P}_2$. Since $p\preceq q$, we have $p\vee q=q$, and hence, for $\alpha>1$,
\[
  d_\alpha(p,q)=H_\alpha(p)-H_\alpha(q)=\frac{1}{\alpha-1}\log\frac{5^\alpha+1}{4^\alpha+2^\alpha}.
\]
In particular, $d_2(p,q)=\log(13/10)$, $d_3(p,q)=(1/2)\log(7/4)$, and $d_4(p,q)=(1/3)\log(313/136)$. Since $(13/10)^2=169/100<175/100=7/4$ and $(7/4)^3=343/64>97969/18496=(313/136)^2$, we obtain $d_2(p,q)<d_3(p,q)>d_4(p,q)$, so $\alpha\mapsto d_\alpha(p,q)$ is not monotone on $[1,\infty)$. 
\end{example}

\subsection{Equality in the triangle inequality for $d_\alpha$}
\label{subsec:conditionsforequalitysupermod}
For $a,b\in\Pn$ with $a\preceq b$, let $[a,b]:=\{s\in\Pn:a\preceq s\preceq b\}$ denote the lattice interval between them. The main result of this subsection characterizes equality in the triangle inequality of Theorem~\ref{thm:metric}. Specifically, we prove the following result. 

\begin{theorem}[Equality in the triangle inequality]
\label{thm:tri-eq}
Let $\alpha\in(1,\infty)$ and $p,q,r\in\Pn$. Then the triangle inequality for $d_\alpha$ holds with equality, i.e., $d_\alpha(p,q)=d_\alpha(p,r)+d_\alpha(r,q)$ if and only if
\[
  p\preceq r\preceq p\vee q \qquad\text{or}\qquad q\preceq r\preceq p\vee q,
\]
that is, if and only if $r\in[p,p\vee q]\cup[q,p\vee q]$. The ``if'' direction holds for every $\alpha\in[1,\infty)$.
\end{theorem}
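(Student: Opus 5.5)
Throughout, fix $\alpha\in[1,\infty)$ and set $m:=p\vee q$, $a:=p\vee r$, $b:=q\vee r$. The plan is to rewrite the triangle defect as a single supermodularity-type expression and then squeeze it between Schur-concavity and Theorem~\ref{thm:supermod}. Expanding Definition~\ref{def:dalpha},
\[
  d_\alpha(p,r)+d_\alpha(r,q)-d_\alpha(p,q)=2\bigl[H_\alpha(r)+H_\alpha(m)-H_\alpha(a)-H_\alpha(b)\bigr].
\]
Equality in the triangle inequality is therefore equivalent to $H_\alpha(a)+H_\alpha(b)=H_\alpha(r)+H_\alpha(m)$.

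\emph{``If'' direction (any $\alpha\ge 1$).} Suppose $p\preceq r\preceq m$. Then $a=p\vee r=r$. Since $q\preceq m$ and $r\preceq m$, we get $b=q\vee r\preceq m$. Conversely, $m=p\vee q\preceq r\vee q=b$ because $p\preceq r$. Hence $b=m$, and the bracket vanishes. The case $q\preceq r\preceq m$ is symmetric.

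\emph{``Only if'' direction ($\alpha>1$).} First note two comparisons. We have $a\wedge b\succeq r$, since both $a$ and $b$ majorize $r$. We also have $a\vee b=p\vee q\vee r\succeq m$. Applying Theorem~\ref{thm:supermod} to $(a,b)$ and then Schur-concavity twice gives
\[
  H_\alpha(a)+H_\alpha(b)\le H_\alpha(a\wedge b)+H_\alpha(a\vee b)\le H_\alpha(r)+H_\alpha(m).
\]
If the two ends are equal, both inequalities are tight. By strict Schur-concavity (valid for finite $\alpha>0$), tightness of the second forces $a\wedge b=r$ and $a\vee b=m$. Tightness of the first says $(a,b)$ attains equality in supermodularity.

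The key lemma needed is the following: for $\alpha>1$, $H_\alpha(a)+H_\alpha(b)=H_\alpha(a\wedge b)+H_\alpha(a\vee b)$ holds only if $a\preceq b$ or $b\preceq a$. Granting it, say $a\preceq b$. Then $r=a\wedge b=a=p\vee r$, so $p\preceq r$. Also $m=a\vee b=b\succeq r$, so $r\in[p,p\vee q]$. The case $b\preceq a$ gives $r\in[q,p\vee q]$.

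\emph{Main obstacle: the equality case of supermodularity.} I would revisit the proof of Theorem~\ref{thm:supermod} and track strictness. The first route to try is the vector-majorization form used by Cicalese--Vaccaro and Yadav--Shkel, namely $\mathsf{sort}(a\,\|\,b)\preceq\mathsf{sort}(a\wedge b\,\|\,a\vee b)$ in $\mathbb R^{2n}$.

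Write $g(x)=\sum_i x_i^\alpha$. Strict convexity of $t\mapsto t^\alpha$ for $\alpha>1$ makes this majorization strict in $g$ unless the two concatenations coincide as multisets. Since $H_\alpha(a)+H_\alpha(b)=-\frac{1}{\alpha-1}\log\bigl(g(a)g(b)\bigr)$, the Rényi statement needs the multiplicative inequality $g(a)g(b)\le g(a\wedge b)\,g(a\vee b)$, not the additive one. I therefore expect the Rényi supermodularity proof to go through an intermediate step, for example via the prefix-sum or least-concave-majorant structure, where a single strict inequality can be isolated.

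For incomparable $a,b$, the join strictly exceeds $\max\{S_a,S_b\}$ on some index where the two prefix-sum curves cross. I would aim to show that this crossing forces a strictly positive gap in that intermediate step. Carrying the strictness through the logarithm is where I expect most of the work to lie.
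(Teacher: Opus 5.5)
Your reduction is correct and matches the paper's route. The defect equals $2\bigl[H_\alpha(r)+H_\alpha(m)-H_\alpha(a)-H_\alpha(b)\bigr]$. Squeezing it between Theorem~\ref{thm:supermod} and strict Schur-concavity forces $a\wedge b=r$, $a\vee b=m$, and tightness of supermodularity for $(a,b)$. Your ``if'' direction is also fine for all $\alpha\ge 1$.

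The gap is that the ``only if'' direction rests entirely on the lemma you grant: tightness of R\'enyi supermodularity forces $a,b$ to be comparable. That lemma is the real content of the theorem, and the only place $\alpha>1$ enters. Your sketch for it would not work, for two reasons.

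\emph{The inequalities are reversed.} The correct precursor is $\mathsf{sort}(a\wedge b\,\|\,a\vee b)\preceq\mathsf{sort}(a\,\|\,b)$. With Hardy--Littlewood--P\'olya this gives $g(a\wedge b)+g(a\vee b)\le g(a)+g(b)$. Since $1/(1-\alpha)<0$, supermodularity of $H_\alpha$ reads $g(a)g(b)\ge g(a\wedge b)\,g(a\vee b)$, not $\le$.

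\emph{Strictness cannot come from the additive step or the least-concave-majorant step.} Take $a=(0.45,0.25,0.2,0.1)$ and $b=(0.4,0.3,0.25,0.05)$, which are incomparable.
\begin{itemize}
\item $\max\{S_a,S_b\}=(0.45,0.7,0.95,1)$ is already concave, so $S_{a\vee b}=\max\{S_a,S_b\}$ at every index. This contradicts your claim that the join strictly exceeds the maximum at a crossing.
\item $\mathsf{sort}(a\,\|\,b)=\mathsf{sort}(a\wedge b\,\|\,a\vee b)=(0.45,0.4,0.3,0.25,0.25,0.2,0.1,0.05)$. Hence the additive inequality is an equality for every $\alpha$; even Shannon supermodularity is tight here.
\end{itemize}

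The missing idea is that strictness lives in the product, not the sum. Keep the additive inequality non-strict and combine it with strict Schur-convexity of $g$. If $a,b$ are incomparable then $a\vee b\notin\{a,b\}$, so $g(a\vee b)>\max\{g(a),g(b)\}$. Using $g(a\vee b)>0$ and $g(a\wedge b)\le g(a)+g(b)-g(a\vee b)$,
\[
g(a)g(b)-g(a\wedge b)\,g(a\vee b)\ \ge\ \bigl(g(a\vee b)-g(a)\bigr)\bigl(g(a\vee b)-g(b)\bigr)\ >\ 0 .
\]
So supermodularity of $H_\alpha$ is strict for every incomparable pair. In the example above at $\alpha=2$, $0.315^2=0.099225>0.099=0.3\times 0.33$, with the gap exactly $(0.33-0.315)^2$.

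This is essentially the paper's argument in Appendix~\ref{app:comparability-proof}. With this lemma in place, the rest of your proof goes through unchanged.
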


We prove Theorem~\ref{thm:tri-eq} in three steps. Lemma~\ref{lem:comparability} identifies when supermodularity of R\'enyi entropy holds with equality; this is the only place where $\alpha>1$ is needed. Lemma~\ref{lem:tri-eq-conditions} reduces equality in the triangle inequality to three conditions on the lattice, one of which is tightness of supermodularity. The two are combined at the end. We now proceed through these steps. 

The following lemma identifies a condition, universal in $\alpha\in(1,\infty)$, for equality in supermodularity of R\'enyi entropy.

\begin{lemma}[Equality in supermodularity of R\'enyi entropy]
\label{lem:comparability}
Let $\alpha\in(1,\infty)$ and $x,y\in\Pn$. Then
\begin{align}
\label{eqn:supermodequality}
  H_\alpha(x)+H_\alpha(y) = H_\alpha(x\wedge y)+H_\alpha(x\vee y)
\end{align}
if and only if $x$ and $y$ are comparable (i.e.\ $x\preceq y$ or $y\preceq x$).
\end{lemma}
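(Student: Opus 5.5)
The ``if'' direction is immediate. If $x\preceq y$ then $x\wedge y=x$ and $x\vee y=y$, so both sides of \eqref{eqn:supermodequality} coincide, and the case $y\preceq x$ is symmetric. For the ``only if'' direction I will prove the contrapositive: if $x,y$ are incomparable, then the inequality of Theorem~\ref{thm:supermod} is strict.

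For $\alpha\in(1,\infty)$ I will work with the power sum $P(p):=\sum_i p_i^\alpha$, so that $H_\alpha(p)=-\frac{1}{\alpha-1}\log P(p)$. Set $m:=x\wedge y$ and $M:=x\vee y$. Supermodularity of $H_\alpha$ is then equivalent to $P(x)P(y)\le P(m)P(M)$, and equality in \eqref{eqn:supermodequality} is equivalent to equality in this product inequality. Write $a:=P(m)$ and $b:=P(M)$. Since $P=e^{-(\alpha-1)H_\alpha}$, strict Schur-concavity of $H_\alpha$ for finite $\alpha$ gives strict Schur-convexity of $P$. When $x,y$ are incomparable we have $m\prec x\prec M$ and $m\prec y\prec M$ with all four distinct, so $a<P(x)<b$ and $a<P(y)<b$.

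The plan is to reduce the product inequality to the additive one,
\[
  P(x)+P(y)\le P(m)+P(M).
\]
Given this, put $u=P(x)$ and $v=P(y)$. Then $uv\le u(a+b-u)$, and $u(a+b-u)-ab=(u-a)(b-u)$. This quantity is strictly positive when $a<u<b$, which is the wrong sign, so the sum inequality alone is not enough. I will instead use it through the logarithm. Because $\log$ is concave, the pair $(\log u,\log v)$ lies in $[\log a,\log b]$ with $u+v\le a+b$, and I will aim for the sharper statement that $(u,v)$ is majorized by $(a,b)$ in the weighted, log-domain sense needed for $\log u+\log v\le\log a+\log b$. Equality in that statement forces $\{u,v\}=\{a,b\}$, which contradicts $a<u,v<b$. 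Concretely, I will try to establish that the four-entry vector $\mathsf{sort}(x\,\|\,y)$ is majorized by $\mathsf{sort}(m\,\|\,M)$. Karamata's inequality with the convex function $t\mapsto t^\alpha$ then yields the additive inequality. Rerunning the same comparison with $x,y$ replaced by the scaled vectors $x\cdot P(y)$ and $y\cdot P(x)$ is my first attempt at obtaining the multiplicative form.

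The main obstacle is this concatenation majorization, together with tracking its equality case. Two difficulties arise. First, the vector $z$ whose prefix sums are $\max\{S_x,S_y\}$ need not be sorted, and $M$ is obtained from it by taking the least concave majorant. Second, a pointwise argument at the crossing indices of $S_x$ and $S_y$ fails: at such an index the increments of $\min$ and $\max$ can be \emph{less} spread than $(x_k,y_k)$. So the argument must be global. My first choice is to prove $\mathsf{sort}(x\,\|\,y)\preceq\mathsf{sort}(m\,\|\,z)\preceq\mathsf{sort}(m\,\|\,M)$, using the fact that every prefix sum of a concatenation is a maximum over splittings $S(i)+S'(j)$. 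The first step compares $\max_{i+j=k}\bigl[S_x(i)+S_y(j)\bigr]$ with the corresponding maximum for $\min/\max$. The second step uses $S_M\ge\max\{S_x,S_y\}$.

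If this route stalls, I will fall back on the proof of Theorem~\ref{thm:supermod} in \cite{CicaleseVaccaro2002, YS26} and trace its equality cases. Each step of that proof is either a strict Schur-convexity step or a strict convexity step for $t^\alpha$ with $\alpha>1$. I will show that incomparability forces at least one of these steps to be strict, which uses strict Schur-concavity for finite $\alpha$ as recorded in Section~\ref{sec:setup}.
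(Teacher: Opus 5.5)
\textbf{There is a sign error that sinks the plan as written.} For $\alpha>1$ the prefactor $\frac{1}{1-\alpha}$ is negative, so $H_\alpha=-\frac{1}{\alpha-1}\log P$ is \emph{decreasing} in $P$. Supermodularity of $H_\alpha$ is therefore equivalent to $P(x)P(y)\ge P(m)P(M)$, not $\le$. What the lemma needs is the strict version $P(x)P(y)>P(m)P(M)$ for incomparable $x,y$, so the product inequality you wrote fails for \emph{every} incomparable pair.

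The reversal propagates through the rest of the plan. The additive inequality that actually holds is $P(x)+P(y)\ge P(m)+P(M)$. The concatenation majorization that holds is $\mathsf{sort}(m\,\|\,M)\preceq\mathsf{sort}(x\,\|\,y)$ in $\mathbb R^{2n}$ (not four entries); this is the precursor, Lemma~\ref{lem:precursor}. The relations you set out to prove are false in general. Take Example~\ref{ex:n3-worked}: $x=(0.5,0.3,0.2)$, $y=(0.45,0.45,0.1)$, $m=(0.45,0.35,0.2)$, $M=(0.5,0.4,0.1)$. The third prefix sums of $\mathsf{sort}(x\,\|\,y)$ and $\mathsf{sort}(m\,\|\,M)$ are $1.4$ and $1.35$. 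At $\alpha=2$ one has $P(x)+P(y)=0.795>0.785=P(m)+P(M)$ and $P(x)P(y)\approx0.1577>0.1533\approx P(m)P(M)$.

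So your main route aims at false statements. The substitutes you float, a ``log-domain weighted majorization'' and rescaling by $P(y)$ and $P(x)$, are not arguments. The fallback is only a placeholder. Note also that by Example~\ref{ex:touch-not-comparable} the Karamata step can be an exact equality for an incomparable pair. Hence any equality-tracing cannot get strictness from strict convexity of $t^\alpha$ in that step.

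\textbf{With the directions corrected, your own computation finishes the proof, and it is then essentially the paper's argument.} From $v\ge a+b-u$ and $u>0$,
\[
uv\ \ge\ u(a+b-u)\ =\ ab+(u-a)(b-u)\ >\ ab,
\]
using $a<u<b$ from strict Schur-convexity. The paper does the same with $x_0=\widehat A-\widehat M\ge y_0=\widehat J-\widehat B$, obtaining $\widehat A\widehat B-\widehat M\widehat J\ge y_0(\widehat J-\widehat A)>0$. The strictness comes from the logarithm, i.e.\ the product step, which is exactly why $\alpha>1$ is needed and the conclusion fails at $\alpha=1$.

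Likewise, the ``obstacle'' you found at crossing indices is precisely what proves the correct precursor. At every $k$ one has $(m_k,z_k)\preceq(x_k,y_k)$, so $\mathsf{sort}(m\,\|\,z)\preceq\mathsf{sort}(x\,\|\,y)$. Also, $M$ is obtained from $z$ by averaging over the linear pieces of the least concave majorant, so $M\preceq z$. Together these give $\mathsf{sort}(m\,\|\,M)\preceq\mathsf{sort}(m\,\|\,z)\preceq\mathsf{sort}(x\,\|\,y)$; the paper simply cites this from \cite{SJDC26}. Your proposed chain was this one read backwards.
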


\begin{IEEEproof}
($\Leftarrow$) Without loss of generality, suppose $x\preceq y$. Then $x\wedge y=x$ and $x\vee y=y$, and the two sides of \eqref{eqn:supermodequality} coincide term-wise. The case $y\preceq x$ is identical.

($\Rightarrow$) This direction is proved in Appendix~\ref{app:comparability-proof}.

\end{IEEEproof}
We now prove a lemma which captures the equality in the triangle inequality in three conditions. 
\begin{lemma}[Equality in the triangle inequality: lattice conditions]
\label{lem:tri-eq-conditions}
Let $\alpha\in[1,\infty)$, $p,q,r\in\Pn$, $x:=p\vee r$ and $y:=r\vee q$. Then $d_\alpha(p,q)=d_\alpha(p,r)+d_\alpha(r,q)$ if and only if
\begin{enumerate}
  \item[\normalfont(C1)] $x\wedge y = r$;
  \item[\normalfont(C2)] $x\vee y = p\vee q$;
  \item[\normalfont(C3)] $H_\alpha(x)+H_\alpha(y)=H_\alpha(x\wedge y)+H_\alpha(x\vee y)$, i.e.\ supermodularity (\Cref{thm:supermod}) is tight for the pair $(x,y)$.
\end{enumerate}
\end{lemma}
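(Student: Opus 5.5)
Expanding the triangle inequality directly, the identity we need to analyse is
\begin{align*}
d_\alpha(p,r)+d_\alpha(r,q)-d_\alpha(p,q) &= 2H_\alpha(r)-2H_\alpha(x)-2H_\alpha(y)+2H_\alpha(p\vee q).
\end{align*}
So equality in the triangle inequality is equivalent to $H_\alpha(r)+H_\alpha(p\vee q)=H_\alpha(x)+H_\alpha(y)$. The plan is to insert the lattice elements $x\wedge y$ and $x\vee y$ and split the difference into three nonnegative terms:
\begin{align*}
\bigl[H_\alpha(r)-H_\alpha(x\wedge y)\bigr]+\bigl[H_\alpha(p\vee q)-H_\alpha(x\vee y)\bigr]+\bigl[H_\alpha(x\wedge y)+H_\alpha(x\vee y)-H_\alpha(x)-H_\alpha(y)\bigr].
\end{align*}
Equality then holds if and only if all three brackets vanish.

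For nonnegativity of the first bracket, note that $r\preceq x$ and $r\preceq y$, so $r\preceq x\wedge y$. Schur-concavity then gives $H_\alpha(r)\geq H_\alpha(x\wedge y)$.

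For the second bracket, $x\vee y=p\vee q\vee r$ (by associativity of $\vee$) and $p\vee q\preceq p\vee q\vee r$. Schur-concavity gives $H_\alpha(p\vee q)\geq H_\alpha(x\vee y)$.

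The third bracket is nonnegative by supermodularity, Theorem~\ref{thm:supermod}, which applies since $\alpha\in[1,\infty)$.

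Thus the sum is zero if and only if each bracket is zero. Vanishing of the third bracket is exactly (C3). For the first two brackets we need strict Schur-concavity, valid for finite $\alpha\geq1$. Since $r\preceq x\wedge y$, we get $H_\alpha(r)=H_\alpha(x\wedge y)$ if and only if $r=x\wedge y$, which is (C1). Similarly, since $p\vee q\preceq x\vee y$, $H_\alpha(p\vee q)=H_\alpha(x\vee y)$ if and only if $x\vee y=p\vee q$, which is (C2). Conversely, if (C1)--(C3) hold, all three brackets vanish and equality follows.

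The only delicate points are the lattice facts $r\preceq x\wedge y$ and $x\vee y=p\vee q\vee r\succeq p\vee q$. Both follow from the general lattice properties of $(\Pn,\preceq)$, namely the universal properties of meet and join and associativity of join, so I expect no real obstacle. The other thing to check is that strict Schur-concavity is invoked only for $\alpha\in[1,\infty)$, which is finite, so it applies.
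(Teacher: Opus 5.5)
Your proposal is correct and follows essentially the same route as the paper: you expand the defect into the same three brackets, obtain nonnegativity from Schur-concavity (twice) and supermodularity, and use strict Schur-concavity for finite $\alpha\ge 1$ to turn the vanishing of the first two brackets into (C1) and (C2). The only difference is cosmetic. You obtain $p\vee q\preceq x\vee y$ by writing $x\vee y=p\vee q\vee r$, whereas the paper observes that $x\vee y$ is an upper bound of both $p$ and $q$; both arguments are valid.
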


\begin{IEEEproof}
Expanding the definition of $d_\alpha$ and rearranging,
\begin{align}
\nonumber\tfrac12&\bigl[d_\alpha(p,r)+d_\alpha(r,q)-d_\alpha(p,q)\bigr]
  \\
  \nonumber
  &= H_\alpha(r)+H_\alpha(p\vee q)-H_\alpha(x)-H_\alpha(y)\\
  \nonumber
  &= \bigl[H_\alpha(r)-H_\alpha(x\wedge y)\bigr]
   + \bigl[H_\alpha(p\vee q)-H_\alpha(x\vee y)\bigr]\\
  \label{eqn:proofof_tri:eq}
  &\quad + \bigl[H_\alpha(x\wedge y)+H_\alpha(x\vee y)-H_\alpha(x)-H_\alpha(y)\bigr].
\end{align}
Since $r\preceq x$ and $r\preceq y$, we have $r\preceq x\wedge y$; since $p\preceq x\preceq x\vee y$ and $q\preceq y\preceq x\vee y$, we have $p\vee q\preceq x\vee y$. By Schur-concavity of $H_\alpha$, the first two brackets in \eqref{eqn:proofof_tri:eq} are non-negative, and by supermodularity (\Cref{thm:supermod}) so is the third. Hence the left-hand side of \eqref{eqn:proofof_tri:eq} is non-negative, which is the triangle inequality, and it vanishes if and only if each bracket vanishes. By strict Schur-concavity of $H_\alpha$, the first bracket vanishes if and only if $r=x\wedge y$, which is (C1), and the second vanishes if and only if $p\vee q=x\vee y$, which is (C2). The third vanishes if and only if (C3) holds.
\end{IEEEproof}

We now complete the proof of Theorem~\ref{thm:tri-eq}, using Lemmas \ref{lem:comparability} and \ref{lem:tri-eq-conditions}. 
\begin{IEEEproof}[Proof of Theorem~\ref{thm:tri-eq}]
Let $x:=p\vee r$ and $y:=r\vee q$.

\emph{Only if part:} Suppose equality holds in the triangle inequality. By Lemma~\ref{lem:tri-eq-conditions}, (C1)--(C3) hold. By Lemma~\ref{lem:comparability} (which uses $\alpha>1$), (C3) says that $x$ and $y$ are comparable. Suppose $x\preceq y$. Then $x\wedge y=x$ and $x\vee y=y$, so (C1) reads $p\vee r=r$, i.e.\ $p\preceq r$, and (C2) reads $y=p\vee q$. Since $r\preceq r\vee q=y=p\vee q$, we conclude $p\preceq r\preceq p\vee q$. 

Note that, in the case $y\preceq x$, (C1) gives $r\vee q=r$, i.e.\ $q\preceq r$, and (C2) gives $x=p\vee q$, so $r\preceq p\vee r=p\vee q$; hence $q\preceq r\preceq p\vee q$. This concludes the proof of the only if part.

\emph{If part:} Suppose $p\preceq r\preceq p\vee q$; the case $q\preceq r\preceq p\vee q$ follows by exchanging $p$ and $q$, since both sides of the claim are symmetric. Then $x=p\vee r=r$. 

Since $p\preceq r\preceq r\vee q$ and $q\preceq r\vee q$, the element $r\vee q=y$ is an upper bound of $p$ and $q$, hence $p\vee q\preceq y$. Conversely, $r\preceq p\vee q$ and $q\preceq p\vee q$ give $y=r\vee q\preceq p\vee q$. Thus $y=p\vee q$.

From arguments in the previous two paragraphs, we have $x=r\preceq p\vee q=y$. Consequently $x\wedge y=x=r$ and $x\vee y=y=p\vee q$, which are (C1) and (C2). Further, (C3) holds because  $x\preceq y$. Thus, by Lemma~\ref{lem:tri-eq-conditions},  we have equality in the triangle inequality. Note that, for this direction (`if part'), the proof uses no property of $\alpha$ beyond $\alpha\in[1,\infty)$. This completes the proof of the `if part' and that of the theorem. 
\end{IEEEproof}

\begin{remark}
Theorem~\ref{thm:tri-eq} subsumes Corollary \ref{corr:join-path} and item (3) in Proposition \ref{prop:lb-ub-chain}. Indeed, taking $r=p\vee q$ in Theorem~\ref{thm:tri-eq} gives Corollary~\ref{corr:join-path}, since $p\preceq p\vee q\preceq p\vee q$. Similarly, if $p\preceq r\preceq q$, then $r\preceq q\preceq p\vee q$, so $r\in[p,p\vee q]$. This recovers item (3) of Proposition \ref{prop:lb-ub-chain}.
\end{remark}

By \Cref{thm:tri-eq}, equality in the triangle inequality forces $p\wedge q\preceq r\preceq p\vee q$: the upper bound is $r\preceq x\preceq x\vee y=p\vee q$ by (C2) from the proof of that theorem, and the lower bound follows from (C1), since $x=p\vee r\succeq p$ and $y=r\vee q\succeq q$ give $r=x\wedge y\succeq p\wedge q$. A reader may thus expect that equality in the triangle inequality holds for every $r\in[p\wedge q,\,p\vee q]$. Theorem~\ref{thm:tri-eq} shows that this range for $r$ is $r\in[p,p\vee q]\cup[q,p\vee q]$, which is a \textit{proper} subset of $[p\wedge q,p\vee q]$ whenever $p$ and $q$ are incomparable. The following example makes the gap concrete.

\begin{example}[Illustrative example for Theorem \ref{thm:tri-eq}]
\label{ex:n3-worked}

Let $p=(0.5,0.3,0.2)$ and $q=(0.45,0.45,0.1)$, so that $p\wedge q=(0.45,0.35,0.2)$ and $p\vee q=(0.5,0.4,0.1)$. By Theorem~\ref{thm:tri-eq}, equality in the triangle inequality for $\alpha>1$ holds exactly for $r$ in the two intervals $[p,p\vee q]$ and $[q,p\vee q]$, which meet only at $p\vee q$. For $\alpha=2$, the choice $r=(0.5,0.35,0.15)\in[p,p\vee q]$ gives $d_2(p,r)+d_2(r,q)=d_2(p,q)\approx0.1121$. In contrast, $r=(0.48,0.37,0.15)$ lies in the interval $[p\wedge,p\vee q]$ but in neither interval (neither $p\preceq r$ nor $q\preceq r$), and gives $d_2(p,r)+d_2(r,q)\approx0.1233>d_2(p,q)$.
\end{example}

\subsection{Relationship to other known quantities in econometrics and ecology}
\label{subsec:relationshiptoother_diversityquantities}
 The order-$\alpha$ Hill number is defined as $D_\alpha(p)\triangleq\exp H_\alpha(p)$ \cite{Hill73}, which has been used extensively as an effective number of species in the study of ecological diversity \cite{Jost2006, LC12, Leinster21}, for instance for comparing coral reef \cite{Wong2018CoralDiversity}, sponge microbiome \cite{MoitinhoSilva2017} and butterfly \cite{DeVriesWalla2001} communities. The order-$2$ Hill number, $D_2(p)=(\sum_i p_i^2)^{-1}$, is also popularly referred to as the inverse-Simpson index \cite{Simpson49}. Clearly, we have $$d_\alpha(p,q)=\log\bigl(D_\alpha(p)D_\alpha(q)\bigr)-2\log D_\alpha(p\vee q).$$ Thus, $d_\alpha$ is always a log-ratio comparison of Hill numbers. 

The Theil's inequality index~\cite{The67} is given by $\log n-H(p)$, it has been observed~\cite{CGV13} that the Theil
index is the distance from $p$ to the bottom of the lattice $\Pn$. Section VI of \cite{YS26} proposes the \emph{R\'enyi--Theil index} for $\alpha \in [1,\infty) $: $\log n-H_\alpha(p)$, where $\alpha$ dictates the degree to which deviation from uniformity is studied.
 
$d_\alpha(p,\cdot)$ evaluated at the two extremes of the lattice, $\un$ and $\eone$, recovers, respectively, the R\'enyi--Theil deviation from uniformity and the R\'enyi entropy itself.

\section{Bounding Total Variation Distance and Hellinger Affinity on the Majorization Simplex}

\label{sec:tv-kl}

The preceding sections studied the geometry of $\Pn$ through the entropy-based metric $\da$. We now turn to two classical measures, the total variation distance (a dissimilarity metric) and Hellinger affinity (a similarity measure), and examine their behaviour and properties when restricted to sorted probability vectors, i.e., $p,q\in\Pn$.

For $p,q\in\Pn$, the total variation distance $\TV(p,q)=\frac12\sum_{i=1}^n|p_i-q_i|$ has the following equivalent forms. The sup-over-events form is
\begin{align}
  \TV(p,q)&=\max_{A\subseteq[n]}|p(A)-q(A)| \label{eqn:supevent}
\end{align}
\cite[Lemma 11.6.1, p. 370]{CoverThomas2006}, the optimal-coupling form is
\begin{align}
  \TV(p,q)&=1-\sum_{i}\min(p_i,q_i) \label{eqn:optcoupling}
\end{align}
\cite[Proposition 4.2, p. 48]{LevinPeres2017}, and, since $\sum_i(p_i-q_i)=0$ forces the positive and negative parts of $p-q$ to have equal sums, the positive-part form is
\begin{align}
  \TV(p,q)&=\sum_i(p_i-q_i)^+. \label{eqn:tv-positive-part}
\end{align}

For probability vectors $p,q$ of the same length, we recall the Hellinger affinity $\rho(p,q)\triangleq\sum_i\sqrt{p_iq_i}$ (also known as the Bhattacharyya coefficient, which is a measure of \textit{similarity} between $p,q$). We also recall the KL divergence $\KL{p}{q}\triangleq\sum_{i\in\supp(p)}p_i\log(p_i/q_i)$ in nats, with convention $0\log 0=0$ and $\KL{p}{q}\triangleq+\infty$ if $\supp(p)\not\subseteq\supp(q)$. Further, we use the following notion of a specific  data processing operation. For a deterministic map $T:[n]\to[m]$, define the induced probability mass functions $p_T,q_T$ on $[m]$ by
\[
  p_T(j)=\sum_{i\in T^{-1}(j)}p_i,\qquad q_T(j)=\sum_{i\in T^{-1}(j)}q_i,\qquad j\in[m],
\]
where we take that empty sum is $0$. These need not be sorted, so in general $p_T,q_T\notin\mathcal{P}_m$. 

Throughout the rest of this section, $\kpq=\max(|\supp(p)|,|\supp(q)|)$ as in Section~\ref{sec:intro}.


\begin{lemma}[Classical facts]
\label{lem:classical}

Let $p,q$ be probability mass functions on $n$ points. Then, 
\begin{align}
  \TV(p,q)&\le\sqrt{1-\rho(p,q)^2}, \label{eqn:tv-affinity}\\
  \rho(p,q)&\ge \max\{e^{-\KL{p}{q}/2},e^{-\KL{q}{p}/2}\} \label{eqn:affinity-kl}\\
  \KL{p}{q}&\ge 2\,\TV(p,q)^2, \label{eqn:pinsker}\\
  \KL{p}{q}&\ge \log\frac{1}{1-\TV(p,q)^2}.\label{eqn:bh}
\end{align}
Also, for any map $T:[n]\to[m]$ and the associated distributions $p_T,q_T$ as defined prior to the lemma,
\begin{align}
  \KL{p}{q}&\ge\KL{p_T}{q_T}. \label{eqn:dpi}
\end{align}
\end{lemma}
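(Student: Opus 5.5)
These are standard facts, so the plan is to give short self-contained derivations, one inequality at a time, relying only on Cauchy--Schwarz, Jensen's inequality, and the log-sum inequality.

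\emph{Inequality \eqref{eqn:tv-affinity}.} Write $|p_i-q_i|=|\sqrt{p_i}-\sqrt{q_i}|\,(\sqrt{p_i}+\sqrt{q_i})$ and apply Cauchy--Schwarz:
\[
4\TV(p,q)^2\le\Bigl(\sum_i(\sqrt{p_i}-\sqrt{q_i})^2\Bigr)\Bigl(\sum_i(\sqrt{p_i}+\sqrt{q_i})^2\Bigr)=(2-2\rho)(2+2\rho).
\]
The right-hand side equals $4(1-\rho^2)$, which gives the claim.

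\emph{Inequality \eqref{eqn:affinity-kl}.} Assume $\KL{p}{q}<\infty$, since otherwise there is nothing to prove. Then
\[
\log\rho(p,q)\ge\log\sum_{i\in\supp(p)}p_i\sqrt{q_i/p_i}\ge\sum_{i\in\supp(p)}p_i\cdot\tfrac12\log(q_i/p_i)=-\KL{p}{q}/2,
\]
where the second step is Jensen's inequality for the concave $\log$. Exchanging the roles of $p$ and $q$ gives the other bound, and together they give the maximum.

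\emph{Bretagnolle--Huber \eqref{eqn:bh}.} Combine the two previous steps: $\rho\ge e^{-\KL{p}{q}/2}$ gives $\rho^2\ge e^{-\KL{p}{q}}$, and then \eqref{eqn:tv-affinity} gives $\TV^2\le 1-\rho^2\le 1-e^{-\KL{p}{q}}$. Rearranging yields $\KL{p}{q}\ge\log\frac{1}{1-\TV(p,q)^2}$.

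\emph{Data processing \eqref{eqn:dpi}.} Apply the log-sum inequality on each fiber $T^{-1}(j)$:
\[
\sum_{i\in T^{-1}(j)}p_i\log\frac{p_i}{q_i}\ge p_T(j)\log\frac{p_T(j)}{q_T(j)}.
\]
Summing over $j\in[m]$ gives the inequality; empty fibers contribute $0$, and the case $\KL{p}{q}=+\infty$ is trivial.

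\emph{Pinsker \eqref{eqn:pinsker}.} This is the step that needs the most care. Let $A=\{i:p_i>q_i\}$ and take $T$ to be the indicator map of $A$, so that $p_T=(a,1-a)$ and $q_T=(b,1-b)$ with $a-b=\TV(p,q)$ by \eqref{eqn:tv-positive-part}. By \eqref{eqn:dpi} it then suffices to prove the binary inequality
\[
a\log\frac ab+(1-a)\log\frac{1-a}{1-b}\ge 2(a-b)^2 .
\]
To do this, fix $a$ and set $g(b)$ equal to the left side minus the right side. Then $g(a)=0$ and
\[
g'(b)=(b-a)\Bigl(\frac{1}{b(1-b)}-4\Bigr).
\]
Since $b(1-b)\le 1/4$, the bracket is nonnegative, so $g'(b)\le0$ for $b\le a$. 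Because $A$ is the set where $p_i>q_i$, we have $b\le a$, so $g(b)\ge g(a)=0$. The boundary cases $b\in\{0,1\}$ either make the divergence infinite or can be handled by continuity.
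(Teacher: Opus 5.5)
Your proposal is correct and follows the route the paper relies on. The paper gives no self-contained proof; it cites exactly these standard arguments: Cauchy--Schwarz for Le Cam's inequality, Jensen for \eqref{eqn:affinity-kl}, the combination of the two for Bretagnolle--Huber as in \cite[Section 2.4]{Tsybakov2009}, the log-sum inequality for \eqref{eqn:dpi}, and, for Pinsker, the reduction to the binary case via the event $\{i:p_i>q_i\}$ as in \cite[Lemma~11.6.1]{CoverThomas2006}. Your individual derivations check out, including $g'(b)=(b-a)\bigl(\frac{1}{b(1-b)}-4\bigr)$ and your handling of infinite divergences.
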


Inequality \eqref{eqn:tv-affinity} is the upper Le Cam inequality and \eqref{eqn:affinity-kl} bounds the Hellinger affinity via Jensen's inequality. Both are derived as intermediate bounds in the proof of the Bretagnolle--Huber inequality \eqref{eqn:bh} \cite{BH1979} in \cite[Section 2.4]{Tsybakov2009}. Inequality \eqref{eqn:pinsker} is Pinsker's inequality \cite{Pinsker1964}, and \eqref{eqn:dpi} is a form of the data-processing inequality \cite[Theorem 2.8.1]{CoverThomas2006} which follows from \cite[Theorem 2.7.1]{CoverThomas2006}.

For $a,b\in[0,1]$ let $\dB(a,b)=a\log\frac ab+(1-a)\log\frac{1-a}{1-b}$ be the binary KL divergence where $\dB(a,b)=+\infty$ if $b\in\{0,1\}$ and $a\ne b$. 


\subsection{Comparison of TV distance and the Kolmogorov-Smirnov distance in $\Pn$}
\label{sec:kl-bounds-tightened}
In this subsection, we prove a simple relationship between the TV distance and the Kolmogorov-Smirnov (KS) distance, both of which are metrics on $\Pn.$ For $p\in \Pn$, recall the definition of the prefix sums as $S_p(l)=\sum_{i=1}^lp_i, \forall l$. 
\begin{definition}[{Prefix deviation or KS distance}]
\label{def:prefix-dev}
For $p,q\in\Pn$, the \emph{prefix deviation} is given as 
\begin{align}
  \KS(p,q):=\max_{1\le j\le \kpq}\bigl|S_p(j)-S_q(j)\bigr|, \label{eqn:ksdef}
\end{align}
which is the discrete Kolmogorov--Smirnov distance between the cumulative prefix sums \cite[Chapter 19]{VanDerVaart1998}, and is known to be a metric on $\Pn$. 
\end{definition}

We also need the following notion of single-crossing and multi-crossing pairs of distributions $p,q$. 
\begin{definition} [{Single-Crossing and Multiple-Crossing Pairs}]
    \label{def:single-crossing}
    For $p=(p_1,\hdots,p_n),q=(q_1,\hdots,q_n)\in \Pn$, let $d=p-q$, where $d_i = p_i - q_i, \forall i$. The pair $p,q\in\Pn$ is said to be \emph{single-crossing}, if there exists $j^\star\in[\kpq]$ such that either of the two conditions holds: (a) $d_i\ge 0$ for all $i\le j^\star$ and $d_i\le 0$ for all $i>j^\star$,  or (b) $d_i\le 0$ for all $i\le j^\star$ and $d_i\ge 0$ for all $i>j^\star$. A pair that is not single-crossing is said to be \emph{multiple-crossing}; equivalently, there exist $i_1<i_2<i_3$ in $[\kpq]$ with $d_{i_1}d_{i_2}<0$ and $d_{i_2}d_{i_3}<0$, i.e., the signs of the nonzero entries of $d$ change at least twice.
\end{definition}

Note that a single-crossing pair is comparable in the majorization order \cite{MOA2011}: under (a), $S_p(j)\ge S_q(j)$  for all $j$, and hence $q\preceq p$, and under (b), $S_p(j)\le S_q(j)$ for all $j$, and thus $p\preceq q$. Note also that the converse fails (e.g.\ $p=(0.5,0.2,0.2,0.1)$, $q=(0.4,0.3,0.15,0.15)$ satisfy $q\preceq p$ but are multiple-crossing). 

We now prove the following result which shows that $\KS(p,q)\le\TV(p,q)$, with equality if and only if the $p,q$ are single-crossing.

\begin{lemma}[{Total Variation Dominates Kolmogorov--Smirnov Distance}]
\label{thm:ks-vs-tv}
For all $p,q\in\Pn$, $\KS(p,q)\le\TV(p,q)$. Equality holds if and only if the pair is single-crossing.
\end{lemma}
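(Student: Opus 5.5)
The inequality comes from the sup-over-events form \eqref{eqn:supevent}. For each $j\in[\kpq]$, the prefix set $A_j=\{1,\dots,j\}$ is an event with $p(A_j)-q(A_j)=S_p(j)-S_q(j)$. Hence $|S_p(j)-S_q(j)|\le\TV(p,q)$ for every $j$, and taking the maximum gives $\KS(p,q)\le\TV(p,q)$.

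For the equality characterization, write $d=p-q$. Let $P=\{i:d_i>0\}$, so that $\TV(p,q)=\sum_{i\in P}d_i=d(P)$ by \eqref{eqn:tv-positive-part}. Note that $d$ vanishes outside $[\kpq]$, since both $p$ and $q$ are zero there; so everything lives on $[\kpq]$.

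\emph{Single-crossing implies equality.} Suppose the pair is single-crossing with index $j^\star$, in case (a). Then $P\subseteq[j^\star]$ and $d_i\ge0$ on $[j^\star]$. This gives $S_p(j^\star)-S_q(j^\star)=\sum_{i\le j^\star}d_i=d(P)=\TV(p,q)$, so $\KS\ge\TV$ and equality holds. Case (b) is symmetric: there $\sum_{i\le j^\star}d_i=-\sum_i d_i^-=-\TV(p,q)$, using $\sum_i d_i=0$.

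\emph{Equality implies single-crossing.} Suppose $\KS(p,q)=\TV(p,q)$, attained at some $j$. Assume $S_p(j)-S_q(j)=\TV(p,q)$; the negative case is symmetric. Then
\[
\sum_{i\le j}d_i\le\sum_{i\le j}d_i^+\le\sum_i d_i^+=\TV(p,q).
\]
Equality throughout forces two things. First, $d_i^-=0$ for $i\le j$, i.e.\ $d_i\ge0$ there. Second, $d_i^+=0$ for $i>j$, i.e.\ $d_i\le0$ there. This is exactly condition (a) with $j^\star=j\in[\kpq]$.

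The only delicate point is bookkeeping. We must ensure the index range $[\kpq]$ in \eqref{eqn:ksdef} is large enough to contain the witnessing $j$. It is, because the maximizing $j$ can be taken in $[\kpq]$: beyond $\kpq$ the prefix differences are constant, equal to $0$. We must also handle the degenerate case $p=q$, which is trivially single-crossing with both sides zero. Sortedness of $p,q$ plays no real role beyond defining $\kpq$.
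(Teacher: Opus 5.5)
Your proof is correct and follows essentially the same route as the paper: the inequality comes from the sup-over-events form applied to the prefix events $A_j$, sufficiency from summing $d$ up to $j^\star$, and necessity from the chain $\sum_{i\le j}d_i\le\sum_{i\le j}d_i^+\le\sum_i d_i^+=\TV(p,q)$, where equality forces the single-crossing sign pattern. One small quibble: sortedness does slightly more than define $\kpq$, since it is what makes both supports initial segments, so that $d$ vanishes outside $[\kpq]$.
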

\begin{IEEEproof}

For $j\in[\kpq]$, let $A_j=\{1,\hdots,j\}$. Observe that $\KS(p,q)=\max_{1\leq j\leq \kpq}|p(A_j)-q(A_j)|$. Hence, by \eqref{eqn:supevent}, it follows that $\KS(p,q)\le\TV(p,q)$.

We now come to the equality condition. Let $d_i:=p_i-q_i$, so that $\TV(p,q)=\sum_id_i^+$ by \eqref{eqn:tv-positive-part}. 

($\Leftarrow$) Suppose $d_i\ge0$ for $i\le j^\star$ and $d_i\le0$ for $i>j^\star$. Then $d_i=d_i^+$ for $i\le j^\star$ and $d_i^+=0$ for $i>j^\star$, so $S_p(j^\star)-S_q(j^\star)=\sum_{i\le j^\star}d_i=\sum_id_i^+=\TV(p,q)$, and hence $\KS(p,q)\ge\TV(p,q)$. By the inequality $\KS(p,q)\leq \TV(p,q)$ already shown, we see that $\KS(p,q)=\TV(p,q)$. Similarly, if $d_i\leq 0$ for $i\le j^\star$ and $d_i\geq 0$ for $i>j^\star$, then  $S_q(j^\star)-S_p(j^\star)=\TV(p,q)$, and again $\KS(p,q) = \TV(p,q)$.

($\Rightarrow$) Suppose $\KS(p,q)=\TV(p,q)$, attained at $j^\star$, i.e., $\KS(p,q)=\TV(p,q)=|p(A_{j^\star})-q(A_{j^\star})|$. Exchanging $p$ and $q$ if necessary, we have $\sum_{i\le j^\star}d_i=\TV(p,q)$, and
\begin{align}
  \TV(p,q)=\sum_{i\le j^\star}d_i\ \stackrel{(i)}{\le}\ \sum_{i\le j^\star}d_i^+\ \stackrel{(ii)}{\le}\ \sum_{i}d_i^+=\TV(p,q). \label{eqn:ks-chain}
\end{align}
Hence $(i)$ and $(ii)$ in \eqref{eqn:ks-chain} are equalities. Since $d_i^+-d_i \ge 0$ and $d_i^+ \ge 0$ for every $i$, equality in $(i)$ gives $d_i=d_i^+\ge0$ for $i\le j^\star$, and equality in $(ii)$ gives $d_i^+=0$, i.e., $d_i\le0$, for $i>j^\star$. Therefore $p,q$ are a single-crossing pair, and hence the proof is complete. 
\end{IEEEproof}

\begin{remark}
\label{rem:prefix-cuts}
For $j\in[\kpq-1]$, applying \eqref{eqn:dpi} to the map that sends $\{1,\ldots,j\}$ to one point and its complement to another gives $\KL{p}{q}\ge\dB\bigl(S_p(j),S_q(j)\bigr)$. If $p,q$ are single-crossing, then by~\Cref{thm:ks-vs-tv} some $j^\star$ satisfies $|S_p(j^\star)-S_q(j^\star)|=\TV(p,q)$. Indeed, $S_p(j^\star)=p(A_{j^\star}),$ where $A_{j^\star}=\{1,\hdots,j^\star\}$, and similarly we have $S_q(j^\star)=q(A_{j^\star}).$ This $A_{j^\star}$ is the event used in the standard proofs of \eqref{eqn:pinsker} (for instance, see \cite[Lemma~11.6.1]{CoverThomas2006}) while utilizing \eqref{eqn:supevent} for the TV distance. Thus, for single-crossing pairs, the binary divergence underlying both classical bounds can be obtained from the prefix sums alone. For pairs that are not single-crossing, no prefix can attain $\TV(p,q)$, as Lemma \ref{thm:ks-vs-tv} shows. For example, for $p=(0.5,0.3,0.2)$ and $q=(0.44,0.37,0.19)$, we have $\KS(p,q)=0.06<0.07=\TV(p,q)$. 
\end{remark}

\subsection{Refined bounds on the Hellinger affinity and total variation distance in $\Pn$}
\label{sec:affinity-bounds} \label{sec:max-tv}

The below result presents a refinement of the trivial bound on TV distance $\TV(p,q)\leq 1$, when the probability vectors are sorted. For unsorted $p,q\in\Delta_n$, $\TV=1$ is attained for disjoint supports. 
\begin{theorem}[{Maximum Total Variation Distance}]
\label{thm:maxtvd}
For any $p,q\in\Pn$, the total variation distance is bounded by: 
\begin{equation}
\TV(p,q) \le 1-\frac{1}{\kpq}, 
\label{eqn:tvdmax}
\end{equation}
with equality if and only if $\{p,q\}=\{\Unif(\kpq),\eone\}$. 
\end{theorem}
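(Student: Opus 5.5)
We plan to prove the bound through the optimal-coupling form \eqref{eqn:optcoupling}, $\TV(p,q)=1-\sum_i\min(p_i,q_i)$. It then suffices to show that $\sum_i\min(p_i,q_i)\ge 1/\kpq$. Set $k:=\kpq$ and assume, exchanging $p$ and $q$ if necessary, that $|\supp(q)|=k$.

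The key observation is the following. Because the vectors are sorted, $q_1\ge q_i$ for every $i$, so $q_1\ge 1/k$. Likewise $p_1\ge 1/|\supp(p)|\ge 1/k$. Hence
\[
\sum_i\min(p_i,q_i)\;\ge\;\min(p_1,q_1)\;\ge\;\frac1k,
\]
which gives \eqref{eqn:tvdmax}.

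For the equality case, suppose $\TV(p,q)=1-1/k$. Then both inequalities in the chain above must be tight.

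\begin{itemize}
\item Tightness of the first inequality forces $\min(p_i,q_i)=0$ for every $i\ge2$.
\item Tightness of the second forces $\min(p_1,q_1)=1/k$.
\end{itemize}

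Say, after relabeling, that $\min(p_1,q_1)$ is attained by a vector $v\in\{p,q\}$, so $v_1=1/k$, and let $w$ be the other vector. Since $v$ is sorted, $v_1=1/k$ forces $|\supp(v)|\ge k$, and since $|\supp(v)|\le k$ we get $v=\Unif(k)$. In particular $v_i=1/k>0$ for $i\le k$. The condition $\min(p_i,q_i)=0$ for $i\ge2$ then gives $w_i=0$ for $2\le i\le k$. Because $w$ is sorted, $w_i=0$ for all $i\ge2$, so $w=\eone$.

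There is one degenerate case: if $k=1$, both vectors equal $\eone=\Unif(1)$, and the claim still holds. Conversely, the pair $\{\Unif(k),\eone\}$ has $\kpq=k$ and $\TV=1-1/k$ by direct computation.

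The only delicate point is making sure the equality argument handles which vector attains the minimum at coordinate $1$. The argument above is symmetric in $p$ and $q$, so no genuine obstacle is expected.
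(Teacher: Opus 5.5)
Your proof is correct and follows essentially the same route as the paper's. Both use the optimal-coupling form \eqref{eqn:optcoupling} and the bound $\sum_i\min(p_i,q_i)\ge\min(p_1,q_1)\ge 1/\kpq$, and then analyze the two tightness conditions. The only difference is the order of deduction in the equality case. You first show that the vector attaining $\min(p_1,q_1)$ is $\Unif(\kpq)$, then conclude the other is $\eone$. The paper first uses the vector with full support $\kpq$ to force the other to be $\eone$, then reads off uniformity from $\min(p_1,1)=1/\kpq$.
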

\begin{IEEEproof}
By the pigeonhole principle, $p_1 \ge 1/\kpq$ and $q_1 \ge 1/\kpq$. Therefore, $\min(p_1, q_1) \ge 1/\kpq$. Recalling (\ref{eqn:optcoupling}), $1 - \TV(p,q) = \sum_i \min(p_i, q_i) \ge \min(p_1, q_1) \ge 1/\kpq$. Rearranging gives (\ref{eqn:tvdmax}). 

If $q=\eone$ and $p=\Unif(\kpq)$, then $\sum_i\min(p_i,q_i)=\min(1/\kpq,1)=1/\kpq$, yielding exactly $\TV=1-1/\kpq$. 

Assuming equality in (\ref{eqn:tvdmax}) and substituting (\ref{eqn:optcoupling}) for $\TV$ in the LHS, we have $\sum_{i=1}^{\kpq} \min(p_i,q_i) = 1/\kpq$. Without loss of generality (by symmetry of $\TV$) say that $|\supp(p)|=\kpq$, i.e. $p$ has the larger support of the pair ($p,q$). Since $\min(p_1, q_1) \ge 1/\kpq$, tightness forces $\min(p_i,q_i)=0$ for all $i\ge 2$. Because $p_i > 0$ for all $i \in [\kpq]$, $q_i=0$ for all $i \ge 2$, establishing $q=\eone$. Tightness in the first term requires $\min(p_1,1) = 1/\kpq$, forcing $p_1 = 1/\kpq$. Because $p_1 \ge p_i$ and $\sum p_i = 1$, this necessitates $p_i=1/\kpq$ for all $i$.
\end{IEEEproof}

For general (not necessarily sorted) distributions $p,q\in\Delta_n$, the Hellinger affinity satisfies $\rho(p,q)\ge 0$, where equality is achieved when $p,q$ have disjoint supports. 
Let $\|\sqrt p\|_1:=\sum_{i=1}^{\kpq}\sqrt{p_i}$. In the following theorem, we prove a sharper lower bound on the Hellinger affinity $\rho(p,q)$, when $p,q\in \Pn$.
For this purpose, motivated by Chebyshev's sum inequality which we use in the proof, we define  $\Ch(p,q)\triangleq\frac{1}{\kpq}\,\|\sqrt p\|_1\|\sqrt q\|_1$. 
\begin{theorem}[{Bound on Hellinger affinity in $\Pn$}]
\label{thm:chebyshev}
For $p,q\in\Pn$,
\begin{align}
  \rho(p,q)\ \stackrel{(a)}{\ge}\ \Ch(p,q)\stackrel{(b)}\ge1/\kpq,\label{eqn:chebyshev-bound}
\end{align}
where inequality $(a)$ is met with equality if and only if either $p$ or $q$ is $\Unif(\kpq)$, (b) holds with equality if and only if $p=q=\eone$, in which case (a) is also an equality. 
\end{theorem}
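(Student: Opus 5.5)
Write $k:=\kpq$. For $i\le k$ set $a_i:=\sqrt{p_i}$ and $b_i:=\sqrt{q_i}$; entries beyond $k$ vanish for both vectors. Since $p,q\in\Pn$ are sorted non-increasingly, both $(a_i)_{i\le k}$ and $(b_i)_{i\le k}$ are non-increasing. This puts us in the similarly-ordered case of Chebyshev's sum inequality,
\[
  \rho(p,q)=\sum_{i=1}^k a_ib_i \ \ge\ \frac1k\Bigl(\sum_{i=1}^k a_i\Bigr)\Bigl(\sum_{i=1}^k b_i\Bigr)=\Ch(p,q),
\]
which is (a). I will prove it through the identity
\[
  k\sum_i a_ib_i-\sum_i a_i\sum_j b_j=\tfrac12\sum_{i,j}(a_i-a_j)(b_i-b_j),
\]
because each summand on the right is $\ge 0$ when $a$ and $b$ are similarly ordered.

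\emph{Equality in (a).} Equality holds if and only if $(a_i-a_j)(b_i-b_j)=0$ for every pair $i,j\le k$.
\begin{itemize}
  \item If $p=\Unif(k)$, then $a$ is constant on $[k]$, so every term vanishes. The same holds if $q=\Unif(k)$.
  \item Conversely, suppose neither $a$ nor $b$ is constant on $[k]$.
  \begin{itemize}
    \item Since $a$ is non-increasing and non-constant, we have $a_1>a_k$. Likewise $b_1>b_k$.
    \item Then the pair $(i,j)=(1,k)$ already gives the strictly positive term $(a_1-a_k)(b_1-b_k)>0$.
  \end{itemize}
  So equality forces $a$ or $b$ to be constant on $[k]$, i.e.\ $p=\Unif(k)$ or $q=\Unif(k)$.
\end{itemize}
This comparison of the extreme indices is the step needing the most care. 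It uses monotonicity of both vectors on the common range $[k]$, and it fixes the convention that the uniform vector here means constant on the first $k$ coordinates.

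\emph{Inequality (b).} Since $p_i\le 1$, we have $\sqrt{p_i}\ge p_i$, hence $\|\sqrt p\|_1\ge\sum_i p_i=1$. Equality holds if and only if every $p_i\in\{0,1\}$, i.e.\ $p=\eone$. The same argument gives $\|\sqrt q\|_1\ge1$, with equality if and only if $q=\eone$. Therefore $\Ch(p,q)\ge 1/k$, with equality if and only if $p=q=\eone$.

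In that case $k=1$, and both sides of (a) equal $1$. Consistently, $\eone=\Unif(1)$, so the equality criterion for (a) also applies.
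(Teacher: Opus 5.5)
Your proof is correct and follows essentially the same route as the paper. It applies Chebyshev's sum inequality to the similarly ordered vectors $(\sqrt{p_i})_{i\le\kpq}$ and $(\sqrt{q_i})_{i\le\kpq}$ through the pairwise-difference identity, reads off strictness from the $(1,\kpq)$ term, and gets (b) from $\sqrt t\ge t$ on $[0,1]$; the only cosmetic difference is that you derive the inequality itself from the identity rather than citing Hardy--Littlewood--P\'olya, which makes your argument self-contained.
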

\begin{IEEEproof}
For real sequences $x,y\in\mathbb R^{\kpq}$ which are similarly ordered (i.e., both $x,y$ are ordered in  non-increasing or non-decreasing fashion), Chebyshev's sum inequality \cite[Chapter II, Theorem 43]{HLP1934} states
\begin{align}
  \frac1{\kpq}\sum_{i=1}^{\kpq}x_iy_i\ \ge\ \Bigl(\frac1{\kpq}\sum_{i=1}^{\kpq}x_i\Bigr)\Bigl(\frac1{\kpq}\sum_{i=1}^{\kpq}y_i\Bigr). \label{eqn:chebsum}
\end{align}
Since $p$ and $q$ are non-increasing, $x_i=\sqrt{p_i}$ and $y_i=\sqrt{q_i}$ are similarly ordered and as $p_i=q_i=0$ for $i>\kpq$, $\rho(p,q)=\sum_{i=1}^{\kpq}\sqrt{p_iq_i}$. Substituting $x_iy_i=\sqrt{p_iq_i}$ in \eqref{eqn:chebsum} and multiplying by $\kpq$ gives \eqref{eqn:chebyshev-bound}.

In \cite[Theorem 43, Section 2.17]{HLP1934}, it is shown that equality in $(a)$ holds only if all the entries in either of the probability vectors are equal, i.e.\ $p=\Unif(\kpq)$ or $q=\Unif(\kpq)$. To see this directly, note the identity
\begin{align}
\label{eqn:HLPxiyiinequality}
  \nonumber \sum_{i=1}^{\kpq}&\sum_{j=1}^{\kpq}(x_i-x_j)(y_i-y_j)\\
  &=2\kpq\sum_{i=1}^{\kpq}x_iy_i-2\Bigl(\sum_{i=1}^{\kpq}x_i\Bigr)\Bigl(\sum_{i=1}^{\kpq}y_i\Bigr),
\end{align}
whose left side has nonnegative terms, so equality in \eqref{eqn:chebsum} holds if and only if $(x_i-x_j)(y_i-y_j)=0$ for all $i,j$. If $p=\Unif(\kpq)$ (or $q=\Unif(\kpq)$), then $x$ (or $y$) is constant and every term vanishes. Conversely, if neither is uniform, then, being non-increasing, $p_1>p_{\kpq}$ and $q_1>q_{\kpq}$, so $x_1>x_{\kpq}$ and $y_1>y_{\kpq}$ and the term $(i,j)=(1,\kpq)$ in \eqref{eqn:HLPxiyiinequality} is strictly positive.

For inequality $(b)$, since $\sqrt t\ge t$ on $[0,1]$, with equality if and only if $t\in\{0,1\}$, we have $\|\sqrt p\|_1\ge\sum_ip_i=1$, and likewise $\|\sqrt q\|_1\ge1$. Hence $\|\sqrt p\|_1\|\sqrt q\|_1\ge1$, and thus by definition, $\Ch(p,q)\ge1/\kpq$. Equality holds if and only if both factors equal $1$, i.e.\ every $p_i,q_i\in\{0,1\}$, which for sorted probability vectors means $p=q=\eone$. As $\kpq = 1$ for $\eone$ by definition, we have $p = q = \Unif(\kpq)$ in this case also. Hence, for above ($p,q$), equality in $(a)$ and $(b)$ holds simultaneously. 
\end{IEEEproof}

The following remark shows that the lower bound $\Ch(p,q)$ for $\rho(p,q)$ in Theorem \ref{thm:chebyshev} dominates the bound \eqref{eqn:affinity-kl} in Lemma \ref{lem:classical} for some $p,q$ choices.
\begin{remark}[{Neither affinity bound dominates}]
\label{prop:jensen-cheb}
Let $\mathcal J(p,q):=\max\{e^{-\KL{p}{q}/2},e^{-\KL{q}{p}/2}\}$. 
We now show that there exist $p,q\in\Pn$ with $\mathcal J(p,q)>\Ch(p,q)$ as well as $p,q\in\Pn$ with $\Ch(p,q)>\mathcal J(p,q)$. Indeed, 
for $p=q=(0.7,0.3)$, $\KL{p}{q}=0$ gives $\mathcal J=1$, whereas $\Ch(p,q)=\tfrac12(\sqrt{0.7}+\sqrt{0.3})^2\approx0.958$. For $p=(0.5,0.5)$ and $q=(0.9,0.1)$, by Theorem \ref{thm:chebyshev} we have $\Ch(p,q)=\rho(p,q)=\sqrt{0.45}+\sqrt{0.05}\approx0.894$, while $\KL{q}{p}\approx0.368$ and $\KL{p}{q}\approx0.511$ give $\mathcal J\approx0.832$.
\end{remark}


The proof for the following corollary is straightforward by combining \eqref{eqn:tv-affinity} with Theorem \ref{thm:chebyshev}. 
\begin{corollary}[{A $\Ch(p,q)$-based bound for $\TV(p,q)$ in $\Pn$}]
\label{thm:sbh}
For $p,q\in\Pn$, 
\begin{align}
  \TV(p,q)\ \le\ \sqrt{1-\Ch(p,q)^2}. \label{eqn:sbh} 
\end{align}
\end{corollary}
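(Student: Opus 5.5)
The bound follows by chaining two earlier results. The key observation is that the map $t\mapsto\sqrt{1-t^2}$ is non-increasing on $[0,1]$. So any lower bound on $\rho(p,q)$ that lies in $[0,1]$ turns into an upper bound on $\sqrt{1-\rho(p,q)^2}$.

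First, I apply the upper Le Cam inequality \eqref{eqn:tv-affinity} from Lemma~\ref{lem:classical}. It holds for arbitrary probability mass functions, so in particular for $p,q\in\Pn$, and gives $\TV(p,q)\le\sqrt{1-\rho(p,q)^2}$.

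Second, I invoke inequality $(a)$ of Theorem~\ref{thm:chebyshev}, namely $\rho(p,q)\ge\Ch(p,q)$. Squaring this comparison is legitimate only if both sides are nonnegative and $\Ch(p,q)\le 1$, so that the right-hand side of \eqref{eqn:sbh} is real.
\begin{itemize}
\item Nonnegativity: by inequality $(b)$ of the same theorem, $\Ch(p,q)\ge 1/\kpq>0$.
\item Upper bound: by Cauchy--Schwarz, $\rho(p,q)=\sum_i\sqrt{p_iq_i}\le\sqrt{\sum_ip_i}\sqrt{\sum_iq_i}=1$.
\end{itemize}
Together these give $0<\Ch(p,q)\le\rho(p,q)\le 1$. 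Hence $\rho(p,q)^2\ge\Ch(p,q)^2$, so $1-\rho(p,q)^2\le 1-\Ch(p,q)^2$, and both quantities are nonnegative.

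Third, I take square roots, which preserves the inequality because the square root is monotone on $[0,\infty)$. Chaining with the first step yields $\TV(p,q)\le\sqrt{1-\rho(p,q)^2}\le\sqrt{1-\Ch(p,q)^2}$, which is \eqref{eqn:sbh}.

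There is no real obstacle here. The only point needing care is the well-definedness check, i.e., the $0<\Ch(p,q)\le 1$ bound handled in the second step.
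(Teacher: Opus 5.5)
Your proof is correct and takes the same route as the paper: chain the upper Le Cam inequality \eqref{eqn:tv-affinity} with inequality $(a)$ of Theorem~\ref{thm:chebyshev}. Your additional check that $0<\Ch(p,q)\le\rho(p,q)\le 1$, which justifies squaring and guarantees that the right-hand side of \eqref{eqn:sbh} is real, is a detail the paper leaves implicit.
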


\begin{remark} [{Comparison of the bound from \Cref{thm:sbh} with pre-existing bounds for $\TV(p,q)$ distance}]
Rearranging \eqref{eqn:pinsker} and \eqref{eqn:bh}, and by symmetry, we get $\TV(p,q)\le\min\left(\sqrt{\KL{p}{q}/2},\sqrt{\KL{q}{p}/2}\right)$ and $\TV(p,q)\leq \min\left(\sqrt{1-e^{-\KL{p}{q}}},\sqrt{1-e^{-\KL{q}{p}}}\right)$, respectively. The bound in Corollary \ref{thm:sbh} is smaller than $\min\left(\sqrt{1-e^{-\KL{p}{q}}},\sqrt{1-e^{-\KL{q}{p}}}\right)$ whenever $\Ch(p,q)^2>\max \left(e^{-\KL{p}{q}}, e^{-\KL{q}{p}}\right)$, which occurs in some cases by \Cref{prop:jensen-cheb}. 

Neither the bound $\sqrt{1-\Ch(p,q)^2}$ nor Pinsker's bound $\sqrt{\min\{\KL{p}{q},\KL{q}{p}\}/2}$ for $\TV(p,q)$ dominates the other as well. For $p=\eone$ and $q=\Unif(6)$ in $\mathcal P_6$, we have $\TV(p,q)=5/6$, $\KL{q}{p}=+\infty$ and $\KL{p}{q}=\log6$, so Pinsker's inequality gives $\TV(p,q)\le\sqrt{(\log 6)/2}\approx0.947$, whereas $\Ch(p,q)=\frac16\cdot1\cdot\sqrt6=1/\sqrt6$ gives $\TV(p,q)\le\sqrt{5/6}\approx0.913$. Hence, there exist $p,q\in\Pn$ such that the $\Ch(p,q)$-based bound is smaller than the Pinsker bound. Conversely, for $p=q=(0.7,0.3)$ in $\mathcal P_2$, Pinsker's inequality gives the exact value $\TV(p,q)=0$, whereas $\Ch(p,q)=\frac12(\sqrt{0.7}+\sqrt{0.3})^2\approx0.958$ gives only $\TV(p,q)\le\sqrt{1-\Ch(p,q)^2}\approx0.286$. Note that the bound from \Cref{thm:maxtvd} is exactly the value of the TV distance for this example, that is $\frac{5}{6}$.
\end{remark}

The following example compares the bound $\sqrt{1-\Ch(p,q)^2}$ of Corollary~\ref{thm:sbh} with the bound $1-1/\kpq$ of Theorem~\ref{thm:maxtvd}, and shows that neither is always smaller than the other.
\begin{example}
\label{ex:tv-bounds-neither}
Let $n=2$ and $p=(1/2,1/2)$, so that $\kpq=2$ and $1-1/\kpq=1/2$ for every $q\in\mathcal P_2$. First, let $q=\eone=(1,0)$. Then $\TV(p,q)=1/2$, so the bound $1-1/\kpq$ of Theorem~\ref{thm:maxtvd} is attained, whereas $\Ch(p,q)=\frac12\cdot\sqrt2\cdot1=1/\sqrt2$ gives $\sqrt{1-\Ch(p,q)^2}=1/\sqrt2>1/2$. Next, let $q=(9/16,7/16)$. Then $\TV(p,q)=1/16$ and $\Ch(p,q)=\frac12\cdot\sqrt2\cdot\frac{3+\sqrt7}{4}$, so $\Ch(p,q)^2=(8+3\sqrt7)/16$ and $\sqrt{1-\Ch(p,q)^2}=\sqrt{8-3\sqrt7}/4\approx0.0626$. This is smaller than $1/2$, since $3\sqrt7>4$, and is within $2\times10^{-4}$ of $\TV(p,q)$, whereas $1-1/\kpq=1/2$. Hence neither of the bounds $1-1/\kpq$ and $\sqrt{1-\Ch(p,q)^2}$ dominates the other. 
\end{example}

\section{Conclusion}

In summary, this work is an attempt to expand the landscape of (dis)similarity measures on the majorization simplex. Relationships between these measures, improved bounds, connections to minimum-entropy coupling and functional representations, are currently being investigated. Applications in ecology and econometrics are also viable areas for further research. 

\newpage 
\bibliographystyle{IEEEtran}
\bibliography{IEEEabrv,ITW26}


\newpage 

\appendices

\section{Proof of \Cref{thm:metric}}
\label{app:metric-proof}

\begin{IEEEproof}

We verify the four metric axioms. 

\emph{Non-negativity:} Since $p\preceq p\vee q$ and $q\preceq p\vee q$, Schur-concavity of $H_\alpha$ gives $H_\alpha(p)\geq H_\alpha(p\vee q)$ and $H_\alpha(q)\geq H_\alpha(p\vee q)$. Adding these gives $d_\alpha(p,q) = H_\alpha(p)+H_\alpha(q)-2H_\alpha(p\vee q) \geq 0$.

\emph{Identity of indiscernibles:} If $p=q$ then $p\vee q=p$, so $d_\alpha(p,p)=2H_\alpha(p)-2H_\alpha(p)=0$. Conversely, suppose $d_\alpha(p,q)=0$. By non-negativity, both $H_\alpha(p)-H_\alpha(p\vee q)\geq0$ and $H_\alpha(q)-H_\alpha(p\vee q)\geq0$, and their sum is zero, so each vanishes. For every finite $\alpha\in[1,\infty)$, $H_\alpha$ is strictly Schur-concave, so $p\preceq p\vee q$ together with equal entropies forces $p=p\vee q$; symmetrically $q=p\vee q$. Hence $p=q$.

\emph{Symmetry:} $d_\alpha(p,q)=d_\alpha(q,p)$ is immediate from commutativity of the join, $p\vee q=q\vee p$.

\emph{Triangle inequality:} To show $d_\alpha(p,q)\leq d_\alpha(p,r)+d_\alpha(r,q)$, expand and cancel to obtain the equivalent inequality $H_\alpha(p\vee r) + H_\alpha(r\vee q) - H_\alpha(r) \leq H_\alpha(p\vee q)$. Since $p\preceq p\vee r$ and $q\preceq r\vee q$, both $p$ and $q$ lie below $(p\vee r)\vee(r\vee q)$ in $\Pn$. Now, $p\vee q$ is the least distribution in $\Pn$ amongst any distribution which majorizes $p$ and $q$. We thus have $p\vee q\preceq (p\vee r)\vee(r\vee q)$, so by Schur-concavity $H_\alpha(p\vee q) \geq H_\alpha\bigl((p\vee r)\vee(r\vee q)\bigr)$. Applying ~\Cref{thm:supermod} to the pair $(p\vee r,\,r\vee q)$ gives $H_\alpha(p\vee r) + H_\alpha(r\vee q) \leq H_\alpha\bigl((p\vee r)\wedge(r\vee q)\bigr) + H_\alpha\bigl((p\vee r)\vee(r\vee q)\bigr)$. Since $r$ is a lower bound for both $p\vee r$ and $r\vee q$, we have $r\preceq(p\vee r)\wedge(r\vee q)$, so by Schur-concavity $H_\alpha\bigl((p\vee r)\wedge(r\vee q)\bigr) \leq H_\alpha(r)$. Combining the arguments in this paragraph gives $H_\alpha(p\vee r)+H_\alpha(r\vee q)-H_\alpha(r) \leq H_\alpha\bigl((p\vee r)\vee(r\vee q)\bigr)$, establishing the required inequality.

\end{IEEEproof}

\section{Remark: Why $\alpha\in[0,1)\cup\{\infty\}$ fails to give a metric} 
\label{app:da-metric-range}
The extensions of Definition~\ref{def:dalpha} to $\alpha=0$ and $\alpha=\infty$ fail the condition $d_\alpha(p,q)=0\implies p=q$ (for $p,q\in\Pn$, general $n$). For instance, for $p=(0.5,0.3,0.2)\neq(0.5,0.4,0.1)=q$ in $\mathcal P_3$, we have $H_\infty(p)=H_\infty(q)=-\log 0.5$, and since $p\vee q$ also has largest entry $0.5$, we thus have $d_\infty(p,q)=0$. Similarly, for $p=(0.9,0.1,0)\neq(0.7,0.3,0)=q$, both have support size $2$ and $p\vee q=p$, so $d_0(p,q)=0$. This occurs because $H_0$ and $H_\infty$ are not strictly Schur-concave, which is required for showing the property that $d_\alpha(p,q)=0\Rightarrow p=q$. However, $d_0$ and $d_\infty$ satisfy other properties, and hence act as pseudometrics in $\Pn$, for general $n$. Note that \cite[Theorem 4]{YS26} shows that supermodularity fails for $\alpha\in(0,1)$ and that we use supermodularity to prove triangle inequality in Appendix \ref{app:metric-proof}. Thus, our proof of triangle inequality breaks down for $\alpha\in(0,1)$. Indeed, we show an instance for which the triangle inequality does not hold for $\alpha=1/2$. Thus, $H_{1/2}(v)=2\log\sum_i\sqrt{v_i}$ and, for $p,q\in\mathcal P_3$,
\begin{align*}
  d_{1/2}(p,q)=2\log\frac{h(p)\,h(q)}{h(p\vee q)^2},\qquad h(v):=\sum_i\sqrt{v_i}.
\end{align*}
Take
\begin{align*}
  p&=\Bigl(\tfrac{1089}{1225},\tfrac{100}{1225},\tfrac{36}{1225}\Bigr),\\
  q&=\Bigl(\tfrac{784}{1225},\tfrac{441}{1225},0\Bigr),\\
  r&=\Bigl(\tfrac{900}{1225},\tfrac{324}{1225},\tfrac{1}{1225}\Bigr),
\end{align*}
all in $\mathcal P_3$. Their cumulative sums are $S_p=\frac{1}{1225}(1089,1189,1225)$, $S_q=\frac{1}{1225}(784,1225,1225)$ and $S_r=\frac{1}{1225}(900,1224,1225)$. The joins have cumulative sums equal to the coordinatewise maxima of these, and these maxima are already concave, so
\begin{align*}
  p\vee q&=\tfrac{1}{1225}(1089,136,0),\\
  p\vee r&=\tfrac{1}{1225}(1089,135,1),\\
  r\vee q&=\tfrac{1}{1225}(900,325,0).
\end{align*}
Hence
\begin{align*}
  h(p)=h(q)&=h(r)=\tfrac75,\\
  h(p\vee q)&=\tfrac{33+2\sqrt{34}}{35},\\
  h(p\vee r)&=\tfrac{34+3\sqrt{15}}{35}, \\ h(r\vee q)&=\tfrac{30+5\sqrt{13}}{35}.
\end{align*}
Since $h(p)=h(q)=h(r)$, the inequality $d_{1/2}(p,q)>d_{1/2}(p,r)+d_{1/2}(r,q)$ is equivalent to
\[
  h(p\vee r)\,h(r\vee q)\;>\;h(r)\,h(p\vee q),
\]
equivalently,
\[
  (34+3\sqrt{15})(30+5\sqrt{13})\;>\;49\,(33+2\sqrt{34})=1617+98\sqrt{34}.
\]
Using $\sqrt{15}>3.872$ and $\sqrt{13}>3.605$ (since $3.872^2<15$ and $3.605^2<13$), the left side exceeds $45.616\cdot 48.025>2190$. Using $\sqrt{34}<5.831$ (since $5.831^2>34$), the right side is less than $1617+98\cdot 5.831<2189$. Hence $d_{1/2}(p,q)>d_{1/2}(p,r)+d_{1/2}(r,q)$.

\section{{Proof of \Cref{lem:comparability} ($\Rightarrow$)}}
\label{app:comparability-proof}

We use the extension of $\preceq$ to $\mathbb R^N$ and the notation $\mathsf{sort}(\cdot\,\|\,\cdot)$ from Section~\ref{sec:setup}. 

\begin{lemma}[{Hardy--Littlewood--P\'olya inequality \cite{HLP1934}, \cite[Chapter 3, Sec. C]{MOA2011}}]
\label{lem:hlp}
Let $\phi:[0,1]\to\mathbb R$ be convex and continuous, and let $a,b\in[0,1]^N$ satisfy $a\preceq b$. Then
\[
  \sum_{i=1}^N\phi(a_i)\ \le\ \sum_{i=1}^N\phi(b_i).
\]
\end{lemma}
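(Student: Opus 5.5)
The plan is the classical route through Karamata's theorem: reduce to a statement about partial sums and then apply Abel summation with the monotone slopes of $\phi$. A full proof of convexity via subgradients would also work, but the summation-by-parts argument uses exactly the majorization data we are given. First, since $\sum_i\phi(a_i)$ and $\sum_i\phi(b_i)$ are invariant under permutations of the entries, we may assume $a$ and $b$ are both sorted in non-increasing order: $a_1\ge\cdots\ge a_N$ and $b_1\ge\cdots\ge b_N$. The hypothesis $a\preceq b$ then says $A_k:=\sum_{i\le k}a_i\le B_k:=\sum_{i\le k}b_i$ for all $k\in[N]$, with $A_N=B_N$. Set $A_0=B_0=0$.

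Next, define slopes
\[
c_i:=\frac{\phi(b_i)-\phi(a_i)}{b_i-a_i}\ \text{if } a_i\neq b_i.
\]
If $a_i=b_i$, let $c_i$ be any value between the left and right derivatives of $\phi$ at $a_i$. The endpoint case $a_i\in\{0,1\}$ is the main technical point here. I would handle it by taking a limit of nearby chord slopes from the interior, or by first proving the result for $\phi$ convex on an open interval and then passing to the limit using continuity. With these slopes, $\phi(b_i)-\phi(a_i)=c_i(b_i-a_i)$.

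The key claim is that $c_1\ge c_2\ge\cdots\ge c_N$. This follows from the three-chord (monotone slope) property of convex functions. The chord slope over the pair $\{a_i,b_i\}$ is non-decreasing in each endpoint. Since $a_i\ge a_{i+1}$ and $b_i\ge b_{i+1}$, we get $c_i\ge c_{i+1}$. Verifying this carefully in the degenerate cases is where I expect the bookkeeping to be heaviest, but it is standard.

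Then apply Abel summation:
\begin{align*}
\sum_{i=1}^N\bigl(\phi(b_i)-\phi(a_i)\bigr)&=\sum_{i=1}^N c_i\bigl[(B_i-A_i)-(B_{i-1}-A_{i-1})\bigr]\\
&=\sum_{i=1}^{N-1}(c_i-c_{i+1})(B_i-A_i)+c_N(B_N-A_N).
\end{align*}
The last term vanishes because $A_N=B_N$. Each remaining term is a product of two nonnegative factors: $c_i-c_{i+1}\ge 0$ by the monotonicity of the slopes, and $B_i-A_i\ge 0$ by majorization. Hence the sum is nonnegative, which proves the inequality.

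If the strict version is needed later, for strictly convex $\phi$ such as $t\mapsto t^\alpha$ with $\alpha>1$, I would note in passing the following. Equality forces $(c_i-c_{i+1})(B_i-A_i)=0$ for every $i$. With strict convexity, this can be leveraged to show that equality holds only when $a=b$ after sorting.
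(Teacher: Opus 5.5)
Your argument is correct. The paper gives no proof of this lemma; it only quotes it from Hardy--Littlewood--P\'olya and Marshall--Olkin. What you have written is Karamata's classical summation-by-parts proof.

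The other standard route writes $a=Pb$ with $P$ doubly stochastic, or reaches $a$ from $b$ by finitely many two-coordinate averaging steps, and then applies Jensen's inequality. Compared with that, your route works directly from the prefix-sum definition of $\preceq$ used in the paper and needs no representation theorem. It also yields the explicit nonnegative decomposition $\sum_i\phi(b_i)-\sum_i\phi(a_i)=\sum_{i<N}(c_i-c_{i+1})(B_i-A_i)$. The Jensen route, in exchange, never involves slopes, so boundary issues never arise there. Your argument uses only the three-chord lemma, which holds for every convex function on an interval, so continuity of $\phi$ is never actually used.

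The endpoint difficulty you flagged is real for the rule you proposed: one-sided derivatives can be infinite, e.g.\ $\phi(t)=-\sqrt t$ at $0$. No limiting argument is needed, though. First delete every index with $a_i=b_i$:
\begin{itemize}
\item the surviving entries are still non-increasing;
\item their prefix differences form a subsequence of $(B_k-A_k)_k$ ending in $B_N-A_N=0$;
\item both sums of $\phi$ drop by the same amount.
\end{itemize}
Afterwards every $c_i$ is a finite chord slope. Write $s(u,v)$ for the chord slope of $\phi$ over $\{u,v\}$. Then $c_i\ge c_{i+1}$ follows from $s(a_i,b_i)\ge s(a_{i+1},b_i)\ge s(a_{i+1},b_{i+1})$, or from the analogous chain through $(a_i,b_{i+1})$. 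At least one of these intermediate pairs is non-degenerate, since otherwise $a_i=b_{i+1}\le b_i=a_{i+1}\le a_i$ would force $a_i=b_i$.

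Your closing remark on strictness is not needed, and it would not help the paper. Strict HLP only separates the two sums when the sorted vectors differ. Example~\ref{ex:touch-not-comparable} shows they can coincide for an incomparable pair, which is why Lemma~\ref{lem:comparability} gets strictness from strict Schur-convexity of $F$ instead.

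As sketched, the remark is also incomplete. Consecutive slopes can coincide even for strictly convex $\phi$, e.g.\ when $(a_i,b_i)=(a_{i+1},b_{i+1})$. So the vanishing of each $(c_i-c_{i+1})(B_i-A_i)$ needs an extra comparison of the slopes at the two ends of a maximal block on which $B_i>A_i$.
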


The majorization precursor on the lattice is the following relation, which we recall in the present notation.

\begin{lemma}[{Majorization precursor, \cite[Thm.~1]{SJDC26}}]
\label{lem:precursor}
For all $x,y\in\Pn$, in $\mathbb R^{2n}$,
\[
  \mathsf{sort}(x\wedge y\,\|\,x\vee y)\ \preceq\ \mathsf{sort}(x\,\|\,y).
\]
\end{lemma}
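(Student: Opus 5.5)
We need to prove Lemma \ref{lem:precursor}: $\mathsf{sort}(x\wedge y\,\|\,x\vee y)\preceq\mathsf{sort}(x\,\|\,y)$ in $\mathbb R^{2n}$. The plan is to verify the two conditions in the definition of majorization directly.

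\emph{Equal totals.} Both concatenations sum to $2$, so only the partial-sum inequalities remain. The tool is the variational form of partial sums of a sorted concatenation. For $u,v\in\Pn$ and $m\in\{0,\dots,2n\}$,
\[
\sum_{i\le m}\mathsf{sort}(u\,\|\,v)_i=\max_{a+b=m,\ 0\le a,b\le n}\bigl(S_u(a)+S_v(b)\bigr).
\]
This holds because the $m$ largest entries of the concatenation consist of a top-$a$ block of $u$ and a top-$b$ block of $v$, and $u,v$ are already sorted. Write $F(a,b):=S_{x\wedge y}(a)+S_{x\vee y}(b)$ and $G(a,b):=S_x(a)+S_y(b)$. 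The goal is then: for every $a+b=m$, there exist $a'+b'=m$ with $F(a,b)\le G(a',b')$.

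\emph{Case split on $a\le b$ versus $a>b$.}

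Case $a\ge b$. Set $M:=\max(S_x,S_y)$ and $J:=S_{x\vee y}=\LCM(M)$. I would use $S_{x\wedge y}(a)=\min(S_x(a),S_y(a))$ and bound $J(b)$ through a chord representation. Since $J$ is the least concave majorant on a discrete domain, either $J(b)=M(b)$, or $b$ lies strictly between two consecutive contact points $b_0<b<b_1$. In the latter case $J$ is linear on $[b_0,b_1]$ with $J(b_i)=M(b_i)$.

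Sub-case $J(b)=M(b)$. Say $M(b)=S_x(b)$. Then $F(a,b)\le S_y(a)+S_x(b)=G(b,a)$, and $(b,a)$ is an admissible split with $b+a=m$. The sub-case $M(b)=S_y(b)$ is symmetric.

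Sub-case $b$ in the interior of a linear piece. Since $J$ is linear there, $J(b)$ is a convex combination of $J(b_0)$ and $J(b_1)$. Pair this with concavity of $S_x,S_y$ at the index $a$. The aim is to exchange mass so that the sum is bounded by $G$ at one of the splits $(b_0+a-b,\,b)$-type shifts. More concretely, I would show $\min(S_x(a),S_y(a))+J(b)\le\max\bigl(G(a+b-b_0,b_0),\,G(a+b-b_1,b_1)\bigr)$, with the roles of $x$ and $y$ chosen according to which function attains $M(b_0)$ and $M(b_1)$.

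If $M(b_0)$ and $M(b_1)$ are attained by the same function, say $S_x$, then concavity of $S_x$ gives $J(b)\le S_x(b)$, and we are back in the first sub-case. The genuinely hard situation is when $M(b_0)=S_x(b_0)$ and $M(b_1)=S_y(b_1)$. There I would write $J(b)=\lambda S_x(b_0)+(1-\lambda)S_y(b_1)$. Then I would use the fact that the increments of $\min(S_x,S_y)$ beyond $a\ge b$ are small compared with the chord slope, which follows from concavity of $S_x$ and $S_y$ and the chord lying above both. From this, $F(a,b)$ should be at most the same convex combination of $G(a+b-b_0,b_0)$ and $G(a+b-b_1,b_1)$, hence at most their maximum.

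Case $a<b$. Here the $x\vee y$ index is larger. I would handle it by the analogous argument using $M(a)$ and the chord through $b$. Alternatively, I would exploit the identity $\min+\max=S_x+S_y$ at a common index to reduce to the previous case.

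\emph{Main obstacle.} I expect the crossing sub-case, where the least concave majorant bridges from $S_x$ to $S_y$, to be the hard step. Making the convex-combination exchange rigorous requires carefully comparing the increments $S_x(a+t)-S_x(a)$ with the chord slope. If that proves messy, my fallback is induction on the number of linear pieces of $\LCM$, following the iterative construction of $x\vee y$ in \cite{CicaleseVaccaro2002}. That construction builds the join by successive averaging steps, each of which is a T-transform, and majorization is preserved under T-transforms.
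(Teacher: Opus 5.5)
The paper gives no proof of this lemma (it is imported from \cite{SJDC26}), so your argument has to stand on its own. The setup is sound: equal totals, the variational formula $\Gamma(k):=\sum_{i\le k}\mathsf{sort}(x\,\|\,y)_i=\max_{a'+b'=k}G(a',b')$, and the contact-point sub-case. That sub-case never uses $a\ge b$; it actually shows $L(a)+M(b')\le\Gamma(a+b')$ for all $a,b'\in\{0,\dots,n\}$, where $L:=\min(S_x,S_y)=S_{x\wedge y}$.

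\textbf{The gap.} The crossing sub-case is the whole difficulty, and the mechanism you propose for it is false. Take $x=(0.5,0.2,0.2,0.1,0)$ and $y=(0.4,0.3,0.3,0,0)$ in $\mathcal P_5$. Then $M=(0.5,0.7,1,1,1)$ and $J=(0.5,0.75,1,1,1)$. The index $b=2$ lies between the contact points $b_0=1$ (where $M=S_x$) and $b_1=3$ (where $M=S_y$), with $\lambda=1/2$. For $a=4$ you get $F(4,2)=1.75$. Your two splits give $S_x(1)+S_y(5)=1.5$ and $S_y(3)+S_x(3)=1.9$, whose $\lambda$-combination is $1.7<1.75$; the literal role assignment gives $1.65$.

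The reason is that the increments of $S_y$ past $b_1$ (here $y_5=0$) can be far below the chord slope. So moving $a$ up by $b-b_0$ does not pay for moving it down by $b_1-b$. In addition, $a+b-b_0$ can exceed $n$. Your fallback is also incomplete. The Cicalese--Vaccaro averaging acts on the increment vector $u$ of $M$, not on $(x,y)$, so you would still owe $\mathsf{sort}(x\wedge y\,\|\,u)\preceq\mathsf{sort}(x\,\|\,y)$.

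\textbf{The repair.} Keep $a$ fixed and let the total vary. $\Gamma$ has non-increasing increments, so it is concave on $\{0,\dots,2n\}$. Since $a+b=\lambda(a+b_0)+(1-\lambda)(a+b_1)$, applying the contact-point bound twice gives
\begin{align*}
F(a,b)&=\lambda\bigl[L(a)+M(b_0)\bigr]+(1-\lambda)\bigl[L(a)+M(b_1)\bigr]\\
&\le\lambda\,\Gamma(a+b_0)+(1-\lambda)\,\Gamma(a+b_1)\le\Gamma(a+b).
\end{align*}
This handles every $(a,b)$ at once, with no split on $a\ge b$ versus $a<b$.

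If you prefer to make the fallback rigorous instead, two steps suffice. First, since $M+L=S_x+S_y$, each pair $\bigl(u_i,(x\wedge y)_i\bigr)$ is a T-transform of $(x_i,y_i)$, which gives $\mathsf{sort}(x\wedge y\,\|\,u)\preceq\mathsf{sort}(x\,\|\,y)$. Second, the sorted prefix sums of $u$ form a concave majorant of $M$, hence dominate $J$. This gives $x\vee y\preceq u$, and that relation is preserved under concatenation with $x\wedge y$.
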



\begin{IEEEproof}[Proof of \Cref{lem:comparability} ($\Rightarrow$)]
Suppose $x,y\in\Pn$ are incomparable and $\alpha\in(1,\infty)$. We show that the supermodularity inequality of Theorem~\ref{thm:supermod} is strict, i.e., $H_\alpha(x)+H_\alpha(y)<H_\alpha(x\wedge y)+H_\alpha(x\vee y)$. Write $F(p):=\sum_ip_i^\alpha$ and
\[
  \widehat A=F(x),\quad \widehat B=F(y),\quad \widehat M=F(x\wedge y),\quad \widehat J=F(x\vee y).
\]
Since $H_\alpha=\tfrac1{1-\alpha}\log F$ with $\tfrac1{1-\alpha}<0$ and $\log$ increasing, the claim is equivalent to
\begin{align}
\label{eqn:Fprod}
  \widehat A\,\widehat B\ >\ \widehat M\,\widehat J .
\end{align}

The map $t\mapsto t^\alpha$ is strictly convex on $[0,1]$ for $\alpha>1$, so $F$ is strictly Schur-convex on $\Pn$ (equivalently, $H_\alpha$ is strictly Schur-concave, Definition~\ref{def:schur}). Incomparability means $x\wedge y\notin\{x,y\}$ and $x\vee y\notin\{x,y\}$, since $x\wedge y=x\iff x\preceq y$ and likewise for the join. As $x\wedge y\preceq x,y\preceq x\vee y$, we obtain, strictly,
\begin{align}
\label{eqn:strictorder}
  \widehat M < \widehat A,\quad \widehat M<\widehat B, \qquad
  \widehat J > \widehat A,\quad \widehat J>\widehat B .
\end{align}

Next we apply Lemma~\ref{lem:hlp} to the precursor of Lemma~\ref{lem:precursor}. Take $N=2n$, $\phi(t)=t^\alpha$, which is convex and differentiable on $[0,1]$, and
\[
  a=\mathsf{sort}(x\wedge y\,\|\,x\vee y),\qquad b=\mathsf{sort}(x\,\|\,y).
\]
Lemma~\ref{lem:precursor} states $a\preceq b$. Sorting does not change the sum of $\phi$ over the entries, so $\sum_i\phi(a_i)=F(x\wedge y)+F(x\vee y)=\widehat M+\widehat J$ and $\sum_i\phi(b_i)=F(x)+F(y)=\widehat A+\widehat B$. Lemma~\ref{lem:hlp} therefore gives
\begin{align}
\label{eqn:additive}
  \widehat A+\widehat B \ \ge\ \widehat M+\widehat J .
\end{align}

It remains to combine \eqref{eqn:strictorder} and \eqref{eqn:additive}. Let $x_0:=\widehat A-\widehat M$ and $y_0:=\widehat J-\widehat B$. Both are positive by \eqref{eqn:strictorder}, and \eqref{eqn:additive} reads $x_0\ge y_0$. Then
\begin{align}
  \widehat A\widehat B-\widehat M\widehat J
  &= \widehat A\widehat B-(\widehat A-x_0)(\widehat B+y_0) \nonumber\\
  &= x_0(\widehat B+y_0)-\widehat Ay_0
   = x_0\widehat J-\widehat Ay_0 \nonumber\\
  &\stackrel{\text{(A)}}{\ge} y_0\widehat J-\widehat Ay_0
   = y_0(\widehat J-\widehat A)
  \stackrel{\text{(B)}}{>} 0 .
  \label{eqn:supermodnonstrictness}
\end{align}
Here (A) holds by \eqref{eqn:additive}, which is $x_0\ge y_0$, together with $\widehat J>0$; and (B) holds by \eqref{eqn:strictorder}, which gives $y_0>0$ and $\widehat J>\widehat A$. Thus \eqref{eqn:supermodnonstrictness} establishes \eqref{eqn:Fprod}, which proves the claim.
\end{IEEEproof}

\begin{remark}[A note on Topkis's lemma]
    A positive function $f$ on a lattice $(L,\preceq)$ is called \emph{log-submodular} if $f(x)f(y)\ge f(x\wedge y)f(x\vee y)$ for all $x,y\in L$ (strictly log-submodular if this inequality is strict). Topkis \cite[Lemma~2.6.6(b)]{Topkis1998} shows that every positive, increasing, submodular function on a lattice is log-submodular. However, no condition for strictness of log-submodularity is given. For $F(p)=\sum_ip_i^\alpha$ with $\alpha>1$, which is positive, increasing and submodular \eqref{eqn:additive} on $\Pn$, Topkis's lemma yields $F(x)F(y)\ge F(x\wedge y)F(x\vee y)$, which is the non-strict supermodularity of R\'enyi entropy $H_\alpha=\tfrac1{1-\alpha}\log F$ in Theorem~\ref{thm:supermod}. Thus, another way to look at Lemma \ref{lem:comparability} is that it gives the conditions for strictness in R\'enyi supermodularity.
\end{remark}

\subsection{Issue with the approach of \cite{SJDC26} regarding the equality conditions for supermodularity of Shannon and Tsallis entropies}

\begin{remark}[An issue with \cite{SJDC26}]
\label{rem:sjdc-strictness}
In \cite{SJDC26}, strict supermodularity of the Shannon and Tsallis entropies on incomparable pairs of distributions is derived from the claim that a certain `precursor' relation holds, i.e., $\mathsf{sort}(x\wedge y\,\|\,x\vee y)\preceq\mathsf{sort}(x\,\|\,y)$ is never an equality when $x$ and $y$ are incomparable. Example~\ref{ex:touch-not-comparable} shows that equality in the precursor relation can occur for an incomparable pair, so this argument does not establish strictness of supermodularity of Shannon and Tsallis entropies for every incomparable pair. For R\'enyi entropy of order $\alpha>1$ the supermodularity inequality is strict for every incomparable pair (Lemma~\ref{lem:comparability}), by an argument that does not pass through equality in the precursor. 
\end{remark}

\begin{example}[Tight Shannon supermodularity for an incomparable pair]
\label{ex:touch-not-comparable}
Let $n=4$, $p=(0.45,0.25,0.2,0.1)$ and $q=(0.4,0.3,0.25,0.05)$. Then $S_p=(0.45,0.7,0.9,1)$ and $S_q=(0.4,0.7,0.95,1)$, so $S_p(1)>S_q(1)$ but $S_p(3)<S_q(3)$ and the pair is incomparable. Now, $p\wedge q=(0.4,0.3,0.2,0.1)$, $p\vee q=(0.45,0.25,0.25,0.05)$, and
\begin{align}
\label{eqn:prec}
  \mathsf{sort}(p\,\|\,q)&=(0.45,0.4,0.3,0.25,0.25,0.2,0.1,0.05) \nonumber
  \\&=\mathsf{sort}(p\wedge q\,\|\,p\vee q).
\end{align}
Now, applying \Cref{lem:hlp} with $\phi(x) = x\log(x)$, $N = 2n$, $a = \sort{p||q)}$, $b = \sort{p\wedge q||p\vee q)}$ to \eqref{eqn:prec} (where $a = b$), we get $H_1(p)+H_1(q)=H_1(p\wedge q)+H_1(p\vee q)$. For $\alpha=2$, $\sum_ip_i^2=\sum_iq_i^2=0.315$ while $\sum_i(p\wedge q)_i^2=0.3$ and $\sum_i(p\vee q)_i^2=0.33$, so $0.315^2=0.099225>0.099=0.3\times0.33$, i.e.\ $H_2(p)+H_2(q)<H_2(p\wedge q)+H_2(p\vee q)$, as Lemma~\ref{lem:comparability} predicts.
\end{example}

\end{document}

%% file: macros.tex
\newcommand{\suppress}[1]{}

\newcommand{\supp}{\mathsf{supp}}  

